\newcommand{\myfigswidth}{.35\textwidth}
\newtheorem{theorem}{Theorem}
\newtheorem{lemma}{Lemma}
\newtheorem{definition}{Definition}
\newcommand{\floor}[1]{\left\lfloor{#1}\right\rfloor}
\newcommand{\ceil}[1]{\left\lceil{#1}\right\rceil}
\newcommand{\remove}[1]{}
\renewcommand{\epsilon}{\varepsilon}
\newcommand{\myshort}[1]{}
\newcommand{\myextended}[1]{#1}
\begin{document}
%
\title{Effect of Number of Users in Multi-level Coded Caching}

\author{\IEEEauthorblockN{Jad Hachem}
\IEEEauthorblockA{University of California, Los Angeles\\
Email: jadhachem@ucla.edu}
\and
\IEEEauthorblockN{Nikhil Karamchandani}
\IEEEauthorblockA{Indian Institute of Technology, Bombay\\
Email: nikhilk@ee.iitb.ac.in}
\and
\IEEEauthorblockN{Suhas Diggavi}
\IEEEauthorblockA{University of California, Los Angeles\\
Email: suhas@ee.ucla.edu}
\thanks{This work was supported in part by NSF grant \#1423271 and a gift by Qualcomm Inc.}
}


\maketitle

\begin{abstract}
It has been recently established that joint design of content delivery and storage (coded caching) can significantly improve performance over conventional caching.
This has also been extended to the case when content has non-uniform popularity through several models.
In this paper we focus on a multi-level popularity model, where content is divided into levels based on popularity.
We consider two extreme cases of user distribution across caches for the multi-level popularity model: a single user per cache (single-user setup) versus a large number of users per cache (multi-user setup).
When the capacity approximation is universal (independent of number of popularity levels as well as number of users, files and caches), we demonstrate a dichotomy in the order-optimal strategies for these two extreme cases.
In the multi-user case, sharing memory among the levels is order-optimal, whereas for the single-user case clustering popularity levels and allocating all the memory to them is the order-optimal scheme.
In proving these results, we develop new information-theoretic lower bounds for the problem.

\end{abstract}

\section{Introduction}
\label{sec:intro}
Wireless traffic has been dominated by broadband content access (driven by video applications) and has strained current wireless network capacity.
While there have been tremendous improvements in wireless data rates over successive generations of wireless systems, these gains alone are not projected to keep up with exponential rise in wireless data demand.
Pre-fetching and storing content in edge caches is one strategy that helps reduce network traffic \cite{FemtoCaching}.
Recently, it has been shown that joint design of storage and delivery (a.k.a.\ ``coded caching'') can significantly improve content delivery rate requirements \cite{maddah-ali2012}.
This was enabled by content placement that creates (network-coded) multicast opportunities among users with access to different storage units, even when they have different (and \emph{a priori} unknown) requests.
Coded caching has been shown to be well-suited to next-generation (heterogeneous) wireless network architectures \cite{MLcodedcaching,HKDinfocom15}.

The setup studied in \cite{maddah-ali2012,maddah-ali2013} consisted of single-level content, \emph{i.e.}, every file in the system is uniformly demanded.
However, it is well understood that content demand is non-uniform in practice, with some files being more popular than others.
Motivated by this, \cite{niesen2013,Zcodedcaching,ZhangArbitrary,MLcodedcaching} considered such non-uniform content demand, following different models.

In \cite{niesen2013,Zcodedcaching,ZhangArbitrary}, the setup considered a single user per cache requesting a file independently and randomly according to some arbitrary probability distribution that represents content popularity.
These works studied the trade-off between the average rate and the cache memory.
A memory-sharing scheme was proposed in \cite{niesen2013}, and its achievable rate was characterized.
However, from our understanding, this scheme was not shown to be order-optimal in general.%
\footnote{We refer to an ``order-optimal'' result as one that is within a constant multiplicative factor from the information-theoretic optimum.}
In \cite{Zcodedcaching,ZhangArbitrary}, a different scheme was proposed, based on a clustering of the most popular levels.
It was shown to be order-optimal for Zipf-distributed content in \cite{Zcodedcaching}, and, more recently, for arbitrary distributions in \cite{ZhangArbitrary}.

By contrast, in \cite{MLcodedcaching}, a deterministic multi-level popularity model was introduced, where it is assumed that a large number of users connect to each cache.
Content is divided into discrete levels based on popularity, and, for each level, a fixed and \emph{a priori} known fraction of the users per cache request files from said level.
It is easy to see that, when the number of users per cache is large enough, this deterministic model will closely approximate an equivalent stochastic-demands model similar to \cite{niesen2013,Zcodedcaching,ZhangArbitrary}.
A worst-case rate-memory trade-off was studied, where the ``worst case'' is over all user demand tuples that obey the constraints set by the model, \emph{i.e.}, where there is a fixed number of users per popularity level per cache.
A memory-sharing strategy was shown to be order-optimal,%
\footnote{The approximation was, however, dependent on the number of popularity levels; an aspect that is strengthened in this paper.}
and a parametric characterization of how much memory to allocate to each popularity level was established.
It was shown that sometimes it is better to store some less popular content without completely storing the more popular content.

In this paper, we focus our attention on the multi-level popularity model, and we explore the role of the number of users per cache by contrasting two extreme cases: a single user per cache with an overall fixed profile of content request \emph{across} all caches (called the ``single-user'' setup in this paper) versus a large number of users per cache with a fixed content request profile for \emph{every} cache, as in \cite{MLcodedcaching} (the ``multi-user'' setup).
We ask for an order-optimality guarantee that is \emph{independent} of the number of popularity levels (in addition to the number of users, files, and caches); a stronger requirement than in \cite{MLcodedcaching}.
For such an approximation guarantee, we demonstrate that drastically different strategies must be used for each setup.
In the single-user case, we show that \emph{clustering} the most popular levels and giving them all the memory, leaving none for the rest, is order-optimal; a strategy similar to those proposed in \cite{Zcodedcaching,ZhangArbitrary}.
In contrast, the multi-user case requires a complete separation of the different levels and a division of the memory between them: the \emph{memory-sharing} scheme introduced in \cite{MLcodedcaching}.
In order to prove these results, we needed to develop new information-theoretic lower bounds for both problems; in particular, the multi-user setup necessitated new non-cut-set bounds that use sliding-window entropy inequalities \cite{Liu14}, providing much stronger approximation guarantees than the results in \cite{MLcodedcaching}.

We also discuss the dichotomy between the two setups.
We show that the memory-sharing strategy can perform poorly in the single-user setup, while clustering can be arbitrarily worse than memory-sharing in the multi-user case.
Furthermore, we briefly explore an open problem that is part of our ongoing research: that of solving a mixture of the two extreme setups.

The paper is organized as follows.
Section~\ref{sec:setup} formulates the problem, describing precisely the multi-user and single-user setups.
We establish some background in Section~\ref{sec:preliminaries}, which enables us to state the main results in Section~\ref{sec:results}.
Section~\ref{sec:comparison} gives an interpretation of the results and the dichotomy in the two setups and briefly explores the mixture of the setups.
\myshort{For lack of space, we skip the detailed proofs here and provide them instead in \cite{NumUsersExtended}.}
\myextended{Detailed proofs are given in the appendices.}

\section{Setup}
\label{sec:setup}

Consider a system where a group of users request files from a server, according to some popularity model.
All files are assumed to be of size $F$ bits.
Prior to any user requests, a \emph{placement phase} occurs in which information about these files is placed in the $K$ access point (AP) caches, of capacity $MF$ bits each.
Then, in the \emph{delivery phase}, users connect to the different caches, and each requests a file based on the popularity model; the more popular files are more likely to be requested.
The server then sends, through the base station (BS), a broadcast message of size $RF$ bits that all the users can hear.
The users combine the broadcast with the contents of their cache to recover the file that they have requested.
Clearly, there is a trade-off between the values of $M$ (the ``cache memory'') and $R$ (the ``broadcast rate'').
The larger the caches, the more information they can store, and hence the smaller the broadcast needed to serve the requests.

The popularity model that we consider is the \emph{multi-level} model.
The files are divided into $L$ \emph{popularity levels}, such that all files in a single level are equally popular.
The levels consist of $N_1,\ldots,N_L$ files.
When a large enough number of users are present in the system, we expect a crystallization of the user profile with respect to the popularity levels.
Specifically, we expect to know, to some degree, the number of users that are making requests from each level.
For example, suppose there are two popularity levels such that a randomly chosen user is three times as likely to request a file from the first level as he is from the second.
Suppose there are $40$ users in the network.
Then, we would expect that about $30$ of them will request a file from the first level, and $10$ from the second.
The multi-level popularity model assumes that these numbers are \emph{fixed} and \emph{known} for every level.
The large number of users allows this deterministic model to closely resemble the stochastic-demands model.
This enables us to analyze the rate-memory trade-off in the worst case, among all cases where user demands respect the constraints imposed by this model.

A pair $(R,M)$ is said to be \emph{achievable} if there exists a placement-and-delivery strategy that uses caches of memory $M$ and transmits, for \emph{any} possible combination of user requests, a broadcast message of rate at most $R$ that satisfies all said requests.
Our goal is to find all such achievable pairs.
In particular, we wish to find the optimal rate-memory trade-off:
\[
R^\ast(M) = \inf\left\{ R : \text{$(R,M)$ is achievable} \right\},
\]
where the minimization is done over \emph{all} possible strategies.

In this paper, we study how the number of users in the system affects its overall behavior.
In terms of the setup, the difference between a large and small number of users is a difference in the distribution of the popularity levels across the users.
For illustration, consider again the example above with two levels and a user ratio of $3$ to $1$, and suppose that there are $20$ caches in the network.
If every cache had just one user connecting to it, then we would expect that about $15$ of all the $20$ users will request files from the first level, and $5$ would request files from the second level, giving a situation similar to \figurename~\ref{fig:setup-G}.
However, we cannot predict which users will be at which cache.
On the other hand, suppose there were $20$ users at every cache (for a total of $400$ users).
Then, we would predict that, \emph{at every cache}, approximately $15$ users will request files from the first popularity level and $5$ will request files from the second.
Hence, the request profiles would be roughly similar for all caches, as illustrated in \figurename~\ref{fig:setup-F}.
We stress again that the multi-level model is deterministic, and that these predictions are in fact assumed as fixed and known realizations.

\subsection{Multi-user setup}

\begin{figure}
\centering
\includegraphics[width=\myfigswidth]{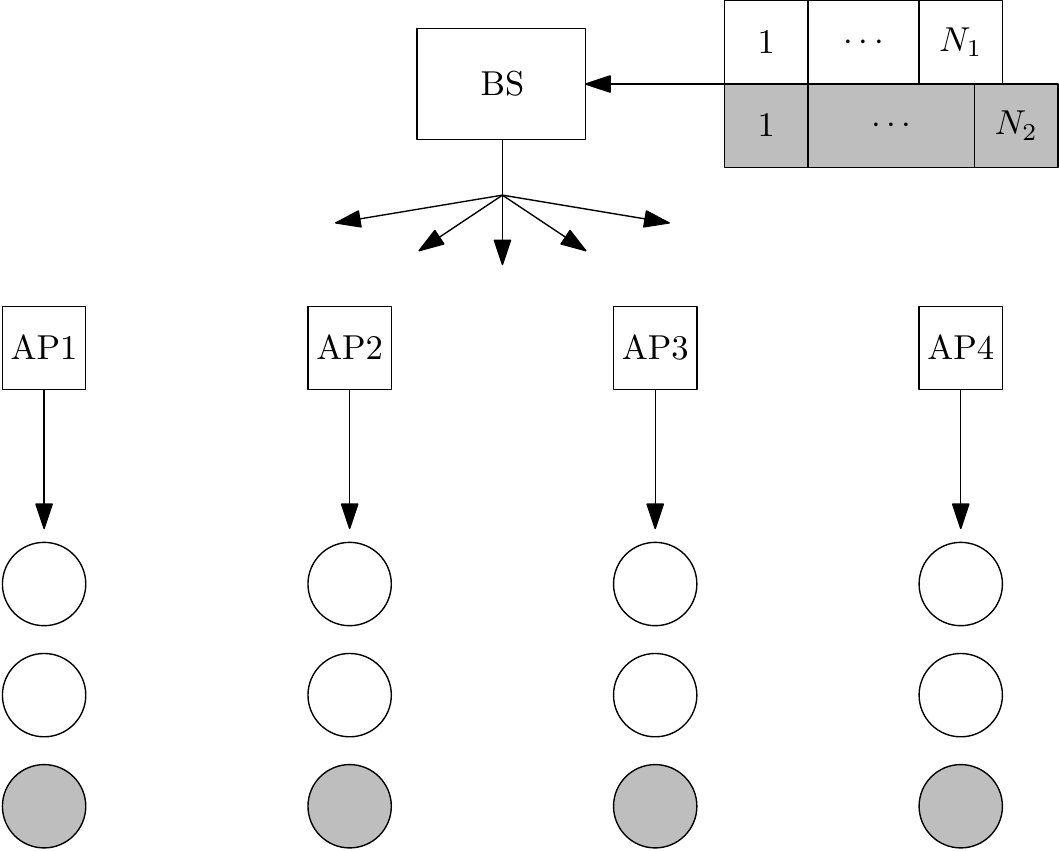}
\caption{Multi-user setup with $K=4$ caches, and $L=2$ levels with $(U_1,U_2)=(2,1)$ users per cache.}
\label{fig:setup-F}
\end{figure}

Consider the setup shown in \figurename~\ref{fig:setup-F}.
For every level $i$, we have exactly $U_i$ users connecting to every cache and requesting a file from $i$.
Notice that every level is represented at every cache.
This setup is identical to the one studied in \cite{MLcodedcaching}.

We assume the following two regularity conditions.
First, for every popularity level $i$, there are more files than users:
\begin{equation}
\label{eq:mu-reg1}
\forall i,\quad N_i\ge KU_i.
\end{equation}
This can be seen, for example, in video applications such as Netflix, where ``files'' would be video segments of a few seconds to a few minutes.

Second, we assume that no two levels have similar popularities.
The popularity of a level can be written as the number of users per file of the level.
Hence, if $i$ is a more popular level than $j$, the regularity condition states:
\begin{equation}
\label{eq:mu-reg2}
\frac{U_i/N_i}{U_j/N_j} \ge \frac{1}{\beta^2},
\end{equation}
where $\beta=1/80$.
The reasoning behind this condition is that, if it did not hold for some levels $i$ and $j$, then we can think of them as essentially one level with $N_i+N_j$ files and $U_i+U_j$ users per cache.
The resulting popularity $\frac{U_i+U_j}{N_i+N_j}$ would be close to both $U_i/N_i$ and $U_j/N_j$.

\subsection{Single-user setup}

\begin{figure}
\centering
\includegraphics[width=\myfigswidth]{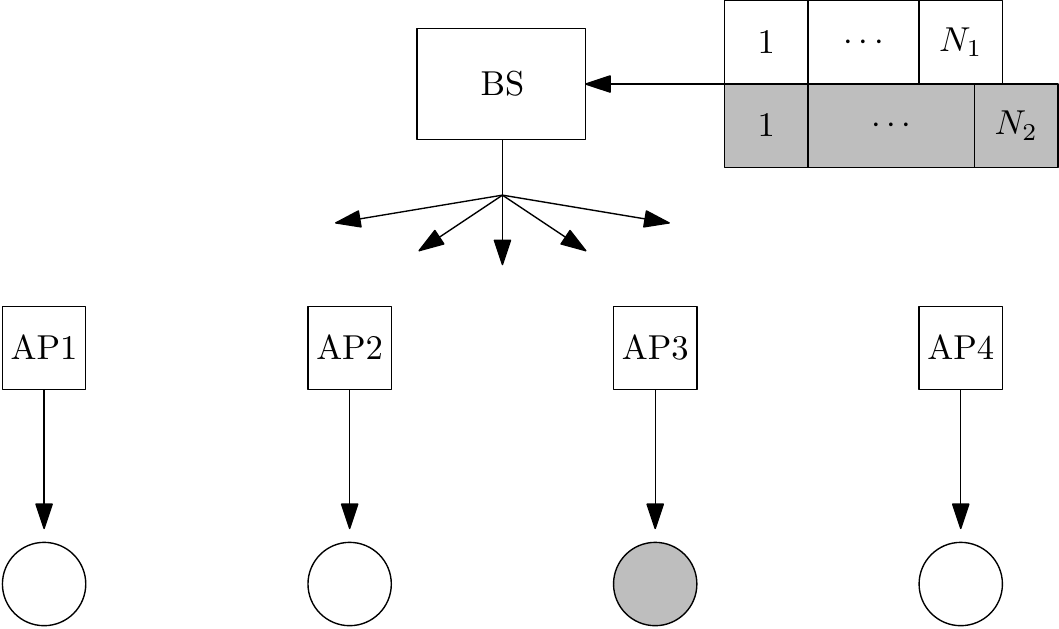}
\caption{Single-user setup with $K=4$ caches, and $L=2$ levels with $(K_1,K_2)=(3,1)$ users.}
\label{fig:setup-G}
\end{figure}

Consider now the setup in \figurename~\ref{fig:setup-G}, depicting the other extreme.
We have only one user connecting to every cache, for a total of $K$ users.
The only information known \emph{a priori} is that, for each level $i$, exactly $K_i$ out of the $K$ users will request a file from $i$.
However, we do not know which users these will be.
The achievability scheme that we will design must allow for any possible arrangement of the users.

In this setup, we also assume that we always have more files than users.
In particular, for any level $i$:
\begin{equation}
\label{eq:su-reg}
N_i \ge K_i.
\end{equation}

\section{Preliminaries}
\label{sec:preliminaries}
Coded caching was introduced in \cite{maddah-ali2012}, which established its order-optimality for a single level of popularity and a single user per cache.
The extension to multiple users per cache was looked at in \cite{maddah-ali2012,maddah-ali2013,JiMultiuser} and is a special case of \cite{Hcodedcaching}; it can be formally stated as follows.

\begin{lemma}[Adapted from {\cite[Lemma~3]{MLcodedcachingExtended}}]
\label{lemma:single-level}
For a single-level caching system with $K$ caches, $U$ users at every cache, $N$ uniformly-distributed files, and a cache memory of $M$, the following rate is achievable:
\[
R^\text{SL}(M,K,N,U) = U \cdot \min\left\{ \frac{N}{M} , K \right\} \cdot \left( 1-\frac{M}{N} \right).
\]
Furthermore, this rate is within a constant of the optimum.
\end{lemma}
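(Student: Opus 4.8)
The plan is to establish the two claims—achievability of $R^\text{SL}(M,K,N,U)$ and its order-optimality—separately, in both cases reducing to the single-user single-level machinery of coded caching.

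For achievability I would use the standard placement with parameter $t = KM/N$, resorting to memory-sharing between the two nearest integers when $KM/N$ is not itself an integer: split every file into $\binom{K}{t}$ equal subfiles indexed by the $t$-subsets of caches, and let cache $k$ store exactly those subfiles whose index contains $k$ (this uses memory $M$). Since the $U$ users at a given cache all see the same cache content, the delivery phase can be run $U$ times in parallel—once for each ``user slot'' $u\in\{1,\dots,U\}$ across the caches—each run being an ordinary single-user delivery over the same placement, at rate $\frac{K(1-M/N)}{1+KM/N}$. This gives an achievable rate $U\cdot\frac{K(1-M/N)}{1+KM/N}$, and it then suffices to note the elementary bound $\frac{K}{1+KM/N}=\frac{1}{1/K+M/N}\le\min\{K,N/M\}$, which yields the claimed $R^\text{SL}(M,K,N,U)$.

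For the lower bound I would use a cut-set argument over several demand rounds. Fix $s\in\{1,\dots,K\}$ and keep only the $sU$ users attached to $s$ of the caches; over $b$ successive demand vectors, arrange for these users to request $bsU$ distinct files, which is feasible as long as $b\le\floor{N/(sU)}$ (here regularity condition \eqref{eq:mu-reg1}, $N\ge KU\ge sU$, is what makes even $b=1$ possible). The $s$ cache contents together with the $b$ broadcasts must determine all $bsU$ files, so $sMF+bRF\ge bsUF$, i.e.\ $R^\ast(M)\ge sU-sM/b$; optimizing $b$ gives $R^\ast(M)\ge\max_{1\le s\le K}\big(sU-sM/\floor{N/(sU)}\big)$.

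The main obstacle—and the only genuinely delicate step—is choosing $s$ so that this bound stays within a universal multiplicative constant of $R^\text{SL}(M,K,N,U)$ in every memory regime. When $M\le N/K$, where $R^\text{SL}\asymp UK$, a constant fraction $s\asymp K$ works, since $M\le N/K$ forces the subtracted term to be at most half of $sU$. When $M>N/K$, where $R^\text{SL}\asymp U(N/M)(1-M/N)$, the right choice is $s$ of order $N/M$ (rounded down, capped below by $1$, which is legitimate since then $s\le K$ and $sU\le N$): taking $s$ a small enough fraction of $N/M$ leaves a multiplicative buffer inside $\floor{N/(sU)}$, again making the subtracted term dominated by $sU$. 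One must handle the interplay of this buffer with the floor and the regime $M$ near $N$ (where $R^\text{SL}\to0$, so the ratio of the two sides, not their difference, is what must be controlled), and dispose of the finitely many small values of $K$ by hand; patching these regime-by-regime comparisons together yields a universal $c$ with $R^\text{SL}(M,K,N,U)\le c\,R^\ast(M)$.
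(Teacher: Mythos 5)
The paper never proves this lemma: it is imported by citation from the extended version of \cite{MLcodedcaching}, so there is no in-paper argument to compare against. Your route --- the Maddah-Ali--Niesen placement with the delivery run once per ``user slot'', followed by a multi-round cut-set converse --- is the standard one and is the right skeleton. Two steps as written do not quite close, though both are repairable. On achievability: for non-integer $t=KM/N$, memory-sharing achieves the \emph{chord} between the rates at $\lfloor t\rfloor$ and $\lfloor t\rfloor+1$, and since $t\mapsto U\frac{K-t}{1+t}$ is convex that chord lies \emph{above} $U\cdot\frac{K(1-M/N)}{1+KM/N}$. Your pointwise inequality $\frac{K}{1+KM/N}\le\min\{K,N/M\}$ therefore bounds the wrong quantity; you must verify that the interpolated rate itself stays below $U\min\{K,N/M\}(1-M/N)$. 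It does, but that comparison is essentially the content of the cited Lemma~3 of \cite{MLcodedcaching}, not a one-line corollary of the pointwise bound.

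On the converse: the formula $R^\ast(M)\ge\max_{1\le s\le K}\bigl(sU-sM/\floor{N/(sU)}\bigr)$ is identically nonpositive whenever $M\ge U\floor{N/U}$, i.e.\ precisely in the corner ``$M$ near $N$'' that you flag, whenever $U\ge 2$ and $U\nmid N$; no choice of $s$ rescues it, because $U\floor{N/(sU)}$ only decreases as $s$ grows. You need the saturated form of the same cut-set bound, in which $b$ is allowed to exceed $\floor{N/(sU)}$ and the decodable file count is $\min\{bsU,N\}$: with $s=1$ and $b=\ceil{N/U}\le 2N/U$ this gives $R^\ast(M)\ge(N-M)/b\ge\frac{U}{2}\left(1-\frac{M}{N}\right)$, which closes that regime (and matches the paper's own use of saturated cut-sets for the set $J$ in Appendix~B). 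With these two repairs, the regime-by-regime bookkeeping you outline does yield a universal constant.
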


In \cite{MLcodedcaching}, we developed a \emph{memory-sharing} scheme for content with multi-level popularity model.
The strategy consisted of dividing the cache memory between all the $L$ levels, and then treating each level as a separate caching sub-system, with the reduced memory.
In other words, we give level $i$ a memory $\alpha_iM$, where
$\alpha_i\in[0,1]$ and $\sum_i\alpha_i=1$, and then apply a single-level placement-and-delivery strategy for this level on this $\alpha_iM$ memory, separately from the other levels.
The total rate for this scheme is:
\begin{equation}
R^\text{MU}\left(M,K,\{N_i,U_i\}_i\right)
= \sum_{i=1}^L R^\text{SL}(\alpha_iM,K,N_i,U_i).
\end{equation}

An optimization of the memory-sharing parameters $\{\alpha_i\}_i$ was studied in \cite{MLcodedcaching}, and an allocation which was demonstrated to be order-optimal was established.
At a high-level this was done by partitioning the popularity levels into three sets: $H$; $I$; and $J$.
The levels in $H$ have such a small popularity that they will get no cache memory.
On the opposite end of the spectrum, the levels in $J$ will get enough cache memory to completely store all their files in every cache.
Finally, the rest of the levels, in the set $I$, will share the remaining memory among themselves, obtaining some non-zero amount but not enough to store all of their files.
An order-optimal $(H,I,J)$ partition and corresponding memory assignments were established in \cite{MLcodedcaching}.

\begin{theorem}[Adapted from {\cite[equation (2)]{MLcodedcaching}}]
\label{thm:multi-user-achievability}
Given a multi-user caching setup, with $K$ caches, $L$ levels, and, for each level $i$, $N_i$ files and $U_i$ users per cache, and a cache memory of $M$, the following rate%
\footnote{This expression of the rate is a slight approximation that we use here for simplicity as it is more intuitive.}  is achievable:
\[
R^\text{MU}(M) \approx
\sum_{h\in H} KU_h
+ \frac{ \left( \sum_{i\in I} \sqrt{N_iU_i} \right)^2 }{ M - \sum_{j\in J}N_j }
- \sum_{i\in I}U_i,
\]
where $(H,I,J)$ is a particular type of partition of the set of levels called an $M$-feasible partition.%
\footnote{See \cite{MLcodedcaching} for more details.}
\end{theorem}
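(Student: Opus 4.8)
The plan is to derive the claimed rate by instantiating the memory-sharing scheme with the memory allocation dictated by the $M$-feasible partition $(H,I,J)$. For any nonnegative weights $\{\alpha_i\}_{i=1}^L$ with $\sum_i\alpha_i=1$, applying Lemma~\ref{lemma:single-level} level-by-level shows that the rate $\sum_{i=1}^L R^\text{SL}(\alpha_iM,K,N_i,U_i)$ is achievable. So it suffices to choose a specific allocation whose total rate matches, up to the stated approximation, the expression in the theorem.

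I would set the allocation as follows: levels in $H$ get no memory, $\alpha_hM=0$; levels in $J$ get exactly enough memory to cache all their files locally, $\alpha_jM=N_j$; and the residual memory $m:=M-\sum_{j\in J}N_j$ is split among the levels in $I$. Substituting into $R^\text{SL}$, a level $h\in H$ contributes $R^\text{SL}(0,K,N_h,U_h)=KU_h$ (since $\min\{N_h/0,K\}=K$ and $1-0/N_h=1$), while a level $j\in J$ contributes $R^\text{SL}(N_j,K,N_j,U_j)=0$ (since $1-N_j/N_j=0$). This produces the first term $\sum_{h\in H}KU_h$ and explains why the $J$-levels drop out of the final expression.

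It then remains to optimize the split of $m$ among the levels in $I$. Writing $m_i=\alpha_iM$ for $i\in I$, the defining property of an $M$-feasible partition guarantees that each $m_i$ falls in the regime where $\min\{N_i/m_i,K\}=N_i/m_i$, so the $I$-contribution equals $\sum_{i\in I}\big(U_iN_i/m_i-U_i\big)$. Minimizing $\sum_{i\in I}U_iN_i/m_i$ subject to $\sum_{i\in I}m_i=m$ is a standard Cauchy--Schwarz (equivalently, Lagrange-multiplier) calculation whose optimum is attained at $m_i\propto\sqrt{N_iU_i}$ and equals $\big(\sum_{i\in I}\sqrt{N_iU_i}\big)^2/m$. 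Adding the three contributions yields
\[
\sum_{h\in H}KU_h+\frac{\left(\sum_{i\in I}\sqrt{N_iU_i}\right)^2}{M-\sum_{j\in J}N_j}-\sum_{i\in I}U_i,
\]
exactly as claimed.

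The step I expect to require the most care is verifying that the partition $(H,I,J)$ is genuinely $M$-feasible and, in particular, that the optimal allocation $m_i\propto\sqrt{N_iU_i}$ simultaneously (i) respects the total-memory budget, (ii) gives each $I$-level at least $N_i/K$ memory so that $N_i/m_i$ rather than $K$ is the active term in the $\min$, and (iii) is consistent with the choice of which boundary levels go into $H$ versus $I$ versus $J$. These are precisely the conditions encoded in the definition of an $M$-feasible partition in \cite{MLcodedcaching}, so this portion of the argument reduces to invoking that characterization, and the footnoted ``$\approx$'' absorbs the lower-order boundary effects (integer rounding of the $m_i$, and the exact treatment of levels on the $H/I$ and $I/J$ boundaries). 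Everything else is the Cauchy--Schwarz optimization above.
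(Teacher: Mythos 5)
Your proposal matches the paper's own development in Appendix~\ref{app:multi-user-achievability}: memory-sharing via Lemma~\ref{lemma:single-level} over an $M$-feasible partition (Definition~\ref{def:m-feasible}), with $\alpha_hM=0$ for $H$, $\alpha_jM=N_j$ for $J$, and the residual memory optimized over $I$ by Cauchy--Schwarz --- essentially the same approach. The one divergence is your purely proportional split $m_i\propto\sqrt{N_iU_i}$ of the residual $M-\sum_{j\in J}N_j$: Definition~\ref{def:m-feasible} only guarantees $\sqrt{N_iU_i}\,\tilde M\ge N_i/K$ with $\tilde M=(M-T_J+V_I)/S_I$, not the stronger condition $(M-T_J)\sqrt{N_iU_i}/S_I\ge N_i/K$ that your step (ii) needs, which is why the paper instead uses the offset allocation $\alpha_iM=\sqrt{N_iU_i}\,\tilde M-N_i/K$ and isolates the sub-threshold levels as $I_0$ in the refined partition; this discrepancy (a bounded constant factor) is absorbed by the theorem's stated ``$\approx$''.
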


Intuitively, since a level $h\in H$ receives no cache memory, all requests from its $KU_h$ users must be handled directly from the broadcast, which requires the transmission of $KU_h$ complete files.
The users in set $J$ require no transmission as the files are completely stored in all the caches; however, set $J$ does affect the rate through the memory available for levels in $I$.
This is apparent in the expression $M-\sum_{j\in J}N_j$.
Finally, the levels in $I$, having received some memory, result in a rate that is inversely proportional to the effective memory and that depends on the level-specific parameters $N_i$ and $U_i$.

\section{Main results}
\label{sec:results}
\subsection{Multi-user setup}

The memory-sharing scheme was proved to be within an approximation factor of $\Theta(L^3)$ of the information-theoretic optimum, where
$L$ is the number of levels \cite{MLcodedcaching}.
In the following theorem, we tighten this gap by developing new, non-cut-set lower bounds, which use sliding-window entropy inequalities introduced in \cite{Liu14} and take into account the contribution of all the levels to the rate.
With these bounds, we are able to completely remove the dependence on the number of levels~$L$.

\begin{theorem}\label{thm:multi-user-gap}
For all valid values of the problem parameters $K$, $L$,
$\{N_i,U_i\}_i$, and $M$, we have:
\[
\frac{R(M)}{R^\ast(M)} \le 192,
\]
where $R(M)$ is the rate achieved by memory-sharing, and $R^\ast(M)$
is the optimal rate over all strategies.
\end{theorem}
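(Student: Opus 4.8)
The plan is to compare the memory-sharing rate $R^\text{MU}(M)$ from Theorem~\ref{thm:multi-user-achievability} against a matching information-theoretic lower bound, proving each of the three "pieces" of the rate ($\sum_{h\in H}KU_h$, the $I$-term, and the implicit handling of $J$) is within a constant of what any scheme must pay. The first step is to set up notation for the $M$-feasible partition $(H,I,J)$ associated with $M$ and record the structural facts about it from \cite{MLcodedcaching} — in particular the inequalities characterizing which level lands in which set, and the popularity-gap regularity condition \eqref{eq:mu-reg2} with $\beta=1/80$, which will be needed to show that the levels in $I$ genuinely have "spread out" popularities so that no single level dominates the sum $\left(\sum_{i\in I}\sqrt{N_iU_i}\right)^2$ by too much. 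I would also reduce to the case where $J=\emptyset$ by the standard trick of subtracting $\sum_{j\in J}N_j$ from $M$ and observing that files fully stored everywhere contribute nothing to either the achievable rate or (up to constants) the converse; this lets me focus the analysis on $H$ and $I$.

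The core is the lower bound. For the $H$-levels, a cut-set argument suffices: pick one user per cache from level $h$, requesting distinct files; the broadcast plus the $K$ caches must reconstruct $KU_h$ files' worth of information (more carefully, since $N_h\ge KU_h$ by \eqref{eq:mu-reg1} we can choose enough distinct files), giving a bound of the form $R^\ast(M)\gtrsim \max_h KU_h - (\text{memory terms})$, and since levels in $H$ receive essentially no memory in an $M$-feasible partition, this recovers $\sum_{h\in H}KU_h$ up to a constant and a logarithmic-in-$L$ loss — which is exactly what we must avoid. So for the $H$ term I would instead invoke the new non-cut-set bound: apply a sliding-window entropy inequality (in the style of \cite{Liu14}) across a sequence of demand vectors that cycle through the $H$-levels, so that the entropies of the broadcast messages telescope and one obtains $R^\ast(M) \gtrsim \sum_{h\in H} KU_h$ directly, without the per-level union bound that introduced the $L$ dependence. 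For the $I$-term, I would use a similar sliding-window bound but now tuned to extract the quantity $\left(\sum_{i\in I}\sqrt{N_iU_i}\right)^2/M - \sum_{i\in I}U_i$: choose a family of demands parametrized by shifts within each level, count how many distinct files get "touched," and balance the number of caches used against the memory $M$ to get the $1/M$ scaling; the $\sqrt{N_iU_i}$ weighting emerges from optimizing the window length per level against that level's file count and user count.

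The main obstacle is the $I$-term converse — making the sliding-window argument yield precisely the sum-of-square-roots form rather than a weaker sum or max. The difficulty is that the entropy inequality naturally produces a bound involving each level separately, and one must combine the per-level constraints with a Cauchy–Schwarz / convexity step, using the regularity gap \eqref{eq:mu-reg2} to ensure the levels' contributions add up coherently (neighboring levels differ enough in popularity that their "windows" do not interfere destructively). Quantitatively tracking all the constants through the sliding-window inequality, the reduction of $J$, the Cauchy–Schwarz step, and the comparison with the $M$-feasible partition's defining inequalities is what produces the explicit $192$; I would organize this as a sequence of lemmas (one per region $H$, $I$, and the partition bookkeeping) so that the constant is assembled at the very end as a product of a handful of small factors. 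Once both the $H$ and $I$ lower bounds are in hand, adding them and comparing term-by-term with the three pieces of $R^\text{MU}(M)$ from Theorem~\ref{thm:multi-user-achievability} gives the ratio bound, uniformly in $L$.
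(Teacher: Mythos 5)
Your high-level strategy matches the paper's: a non-cut-set lower bound built from the sliding-window subset entropy inequality, with per-level window sizes tuned so that the $H$-levels contribute $\sum_{h\in H}KU_h$ additively and the $I$-levels contribute terms proportional to $\sqrt{N_iU_i}$. However, two of your structural choices would fail as stated. First, the reduction to $J=\emptyset$ by ``subtracting $\sum_{j\in J}N_j$ from $M$'' is not valid on the converse side: $R^\ast(M)$ is an infimum over \emph{all} schemes, which are under no obligation to store the $J$-files completely, so you cannot simply charge their memory up front. The paper instead keeps the $J$-levels inside the single sliding-window bound (with window size $s_j=1$), decodes all $N_j$ files of each $j\in J$ across the $b$ broadcasts, and uses the resulting $\sum_{j\in J}N_j/b$ term to cancel the $T_J/b$ portion of the memory penalty $tM/b$ --- that cancellation is exactly what makes the effective memory $M-T_J$ appear in the lower bound. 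Second, you propose deriving the $H$-bound and the $I$-bound separately and then ``adding them.'' Two separate lower bounds on the same quantity do not add; at best you get their maximum. The entire point of Lemma~\ref{lemma:multi-user-converse} is that \emph{one} application of the telescoping sliding-window argument, conditioning successively on the decoded file sets $\mathcal{W}^1,\mathcal{W}^2,\ldots$, produces a \emph{single} inequality $R^\ast(M)\ge\sum_i\min\{s_itU_i,N_i/(s_ib)\}-tM/b$ in which all levels appear simultaneously with only one memory term. Without that unified bound the $J$-cancellation above has nothing to cancel against, and the constants do not assemble.

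Two smaller points. The sum-of-square-roots-squared form does not require a Cauchy--Schwarz step: choosing $s_i\approx S_I\sqrt{N_i/U_i}/(M-T_J+V_I)$ (so the factor $S_I=\sum_{i\in I}\sqrt{N_iU_i}$ is already embedded in each window size) makes $\sum_{i\in I}s_itU_i$ collapse directly to $S_I^2/(M-T_J+V_I)$; the obstacle you anticipate there does not arise. And the regularity condition \eqref{eq:mu-reg2} is not used to make the levels' contributions ``add coherently''; it is used to guarantee that the sub-regime $I_1$ (levels of $I$ that are almost fully cached) contains at most one level, which matters only for the achievability-side case analysis. You would also need the refined partition $(H,I_0,I',I_1,J)$ and a verification that each $s_i$ lies in the admissible range $\{1,\ldots,\lfloor K/2t\rfloor\}$, which is where the defining inequalities of the $M$-feasible partition enter.
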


\subsection{Single-user setup}

In the single-user setup, the scheme that we propose is quite different.
Instead of separating the levels, we \emph{cluster} a subset of them into a super-level that will be treated as essentially one level.
Specifically, we partition the levels into two subsets: $H'$ and $I'$.
The set $I'$ will be clustered as one super-level and given the entire memory $M$, while $H'$ will receive no memory.

To understand how to choose $H'$ and $I'$, consider the following rough analysis.
Suppose that all levels except one (let us call it $j$) have been split into $H'$ and $I'$.
Then, ignoring level $j$ for the moment, the rate, using Lemma~\ref{lemma:single-level}, would be:
\begin{IEEEeqnarray*}{rCl}
R
&=& {\textstyle R^\text{SL}(0,\sum_{h\in H'}K_h,\sum_{h\in H'}N_h,1)}\\
&& \textstyle {}+ R^\text{SL}(M,\sum_{i\in I'}K_i,\sum_{i\in I'}N_i,1)\\
&\approx& \sum_{h\in H'} K_h
+ \frac{\sum_{i\in I'}N_i}{M}.
\end{IEEEeqnarray*}
If we were to add level $j$ to $H'$, that would result in the addition of a $K_j$ term, since all $K_j$ requests would be completely served by the broadcast.
On the other hand, if it is added to $I'$, then we would get an additional $N_j/M$ term, since the total number of files in $I'$ would increase by $N_j$.
Clearly, it is beneficial to choose the smaller of the two quantities.

Though the above analysis is rough, its main idea still holds.
In general, we choose the partition $(H',I')$ as follows:
\begin{equation}
\label{eq:G-partition}
H' = \left\{ h\in\{1,\ldots,L\} : M < \frac{N_h}{K_h} \right\};
\quad
I' = (H')^c.
\end{equation}
Then, by giving all of the memory to $I'$, we can apply a single-level caching-and-delivery scheme to obtain the rate in the following theorem.

\begin{theorem}\label{thm:single-user-achievability}
Consider the multi-level, single-user setup with $L$ levels, $N_i$ files and $K_i$ users for each level $i$, and cache memory $M$.
Then, the following rate is achievable:
\[
R^\text{SU}\left(M,\{N_i,K_i\}_i\right)
=
\sum_{h\in H'} K_h
+
\max\left\{ \frac{\sum_{i\in I'}N_i}{M} - 1 \,,\, 0 \right\},
\]
where $H'$ and $I'$ are as in \eqref{eq:G-partition}.
\end{theorem}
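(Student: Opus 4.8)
The plan is to exhibit an explicit scheme matching the partition in \eqref{eq:G-partition}: give the levels in $H'$ no cache memory whatsoever, merge all the levels of $I'$ into a single ``super-level'' whose combined library has $N_{I'} := \sum_{i\in I'} N_i$ files, and devote the entire memory $M$ to that super-level. Since the two groups of files use disjoint portions of the cache ($M$ for $I'$, nothing for $H'$), the worst-case rate is at most the sum of the worst-case rates of the two independent sub-problems, and the proof reduces to bounding each.

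First I would handle $H'$. Level $h$ is requested by exactly $K_h$ users (whose host caches we do not know a priori), so across all of $H'$ at most $\sum_{h\in H'} K_h$ distinct files are ever demanded; with empty caches for these files the server simply broadcasts the requested files uncoded, contributing rate at most $\sum_{h\in H'} K_h$. The regularity condition \eqref{eq:su-reg} shows this is in fact the worst case, but for achievability only the upper bound is needed.

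Next, the $I'$ part. Here I would argue that, once we have committed to serving the worst demand, popularity is irrelevant, so the super-level behaves exactly like a single-level library of $N_{I'}$ uniformly-indexed files over the $K$ caches with memory $M$, and Lemma~\ref{lemma:single-level} applies (with $U=1$; its formula already incorporates the convexification over non-integer parameters). The only wrinkle is that we do not know which caches host a user requesting from $I'$: but the single-level placement is symmetric across caches and hence independent of this, while at delivery we know the at most $K_{I'} := \sum_{i\in I'} K_i \le K$ active caches and run the single-level delivery restricted to them, which never costs more than serving all $K$ caches. Thus this part contributes at most $R^\text{SL}(M,K,N_{I'},1) = \min\{N_{I'}/M, K\}\,(1-M/N_{I'})$. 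The key observation is that the definition of $I'$ gives $M \ge N_i/K_i$, i.e.\ $K_i \ge N_i/M$, for every $i\in I'$; summing over $I'$ yields $N_{I'}/M \le K_{I'} \le K$, so $\min\{N_{I'}/M,K\} = N_{I'}/M$ and the contribution equals $N_{I'}/M - 1$ when $M \le N_{I'}$, while when $M > N_{I'}$ we simply store all of $I'$ and pay nothing. Either way the $I'$ contribution is at most $\max\{N_{I'}/M - 1, 0\}$, and adding the $H'$ term gives the claimed $R^\text{SU}$.

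\textbf{Main obstacle.} The delicate step is the $I'$ reduction: one has to justify cleanly that clustering heterogeneous popularity levels into one library is legitimate for a worst-case achievability statement, and that the unknown user-to-cache assignment is absorbed by the cache-symmetry of the single-level placement together with the monotonicity of single-level delivery in the number of active caches. Everything else — the $H'$ bound and the inequality $N_{I'}/M \le K_{I'}$ — is routine bookkeeping once the definitions are unwound.
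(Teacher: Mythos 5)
Your proposal is correct and follows essentially the same route as the paper: cluster $I'$ into a single super-level given all of $M$, serve $H'$ by uncoded broadcast, invoke the (decentralized, cache-symmetric) single-level scheme of Lemma~\ref{lemma:single-level} so that the unknown user-to-cache assignment is handled at delivery time, and use $M \ge N_i/K_i$ for $i\in I'$ to conclude $\sum_{i\in I'}N_i/M \le \sum_{i\in I'}K_i$ so the $\min$ collapses to $N_{I'}/M$. The only cosmetic difference is that the paper applies the lemma with $\sum_{i\in I'}K_i$ caches rather than $K$, which changes nothing in the conclusion.
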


The next theorem shows that this scheme is order-optimal for the single-user setup.

\begin{theorem}\label{thm:single-user-gap}
In the single-user setup, if $R(M)$ denotes the rate achieved by the clustering scheme and $R^\ast(M)$ denotes the optimal rate, then, for all values of the problem parameters $L$, $\{N_i,K_i\}_i$ and $M$:
\[
\frac{R(M)}{R^\ast(M)} \le 72.
\]
\end{theorem}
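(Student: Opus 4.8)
The plan is to establish an information-theoretic lower bound on $R^\ast(M)$ that matches, up to a universal constant, the achievable rate of Theorem~\ref{thm:single-user-achievability}. Write that achievable rate as $R^\text{SU}=A+B$, where $A=\sum_{h\in H'}K_h$ is the broadcast cost of the uncached levels and $B=\max\bigl\{N_{I'}/M-1,\,0\bigr\}$ with $N_{I'}:=\sum_{i\in I'}N_i$. It is enough to prove $R^\ast(M)\ge A/c_1$ and $R^\ast(M)\ge B/c_2$ for absolute constants $c_1,c_2$, since then $R^\text{SU}=A+B\le(c_1+c_2)\,R^\ast(M)$. Unlike the multi-user setup, only cut-set bounds are needed. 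The one I will use is: for any set of levels $S$, writing $N_S:=\sum_{i\in S}N_i$ and $K_S:=\sum_{i\in S}K_i$ and fixing an integer $1\le s\le\min(N_S,K_S)$,
\[
R^\ast(M)\ \ge\ \frac{N_S-sM}{\bigl\lceil N_S/s\bigr\rceil}.
\]
This follows in the standard way: pick $s$ caches whose users request $s$ distinct files drawn from the levels of $S$ (a legal demand since $s\le K_S$), cycle these requests over $\lceil N_S/s\rceil$ demand configurations so that every file of every level in $S$ appears, and note that the $s$ cache contents plus the $\lceil N_S/s\rceil$ broadcasts determine all $N_S$ files. The point is that $s$ is a free parameter, to be optimized separately for $S=I'$ and $S=H'$.

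Bounding $B$ is routine. Membership $i\in I'$ means $N_i\le K_iM$, while $N_i\ge K_i$ by \eqref{eq:su-reg}; hence a nonempty $I'$ forces $M\ge1$ and, on summing, $N_{I'}\le MK_{I'}$, so $N_{I'}/(2M)$ is at most both $K_{I'}$ and $N_{I'}$. Taking an integer $s\approx N_{I'}/(2M)$ in the cut-set bound makes the numerator roughly $N_{I'}/2$ and the denominator roughly $2M$, giving $R^\ast(M)\gtrsim N_{I'}/M\ge B$. The small regime $0\le B<1$ (in particular $A=0$, i.e.\ every level clustered) is handled directly: if $B>0$ then $N_{I'}>M$, and the cut-set bound with $s=1$ already gives $R^\ast(M)\ge 1-M/N_{I'}$, which is within a factor $2$ of $B=N_{I'}/M-1$.

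Bounding $A$ is the crux, and the difficulty is that the obvious choices of $s$ are too weak. Using $s=K_h$ with the files of a single uncached level $h$ yields $R^\ast(M)\gtrsim K_h$ only when $K_hM\le N_h/2$, and even then, summing per-level bounds produces only $\max_h K_h$ rather than $\sum_h K_h$. Conversely, applying the cut-set bound to $S=H'$ with $s=K_{H'}$ is too weak when $M$ is large, because $N_{H'}-K_{H'}M$ --- which is positive since every $N_h>K_hM$ --- may be a vanishing fraction of $N_{H'}$. The remedy is to apply the bound to $S=H'$ with an optimized $s$. If $K_{H'}M<N_{H'}/2$, take $s=K_{H'}$: the numerator exceeds $N_{H'}/2$ and, since $N_{H'}\ge K_{H'}$, the ceiling $\lceil N_{H'}/K_{H'}\rceil\le 2N_{H'}/K_{H'}$, so $R^\ast(M)\ge K_{H'}/4=A/4$. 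If $K_{H'}M\ge N_{H'}/2$, take an integer $s\approx N_{H'}/(2M)$ --- admissible because then $N_{H'}/(2M)\le K_{H'}$ and $N_{H'}>M$ (as each $N_h>K_hM\ge M$) --- to get $R^\ast(M)\gtrsim N_{H'}/M$, which exceeds $K_{H'}=A$ since $N_{H'}>K_{H'}M$. Either way $R^\ast(M)\gtrsim A$.

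Putting the two bounds together gives $R^\text{SU}\le C\,R^\ast(M)$; chasing the constants through the ceilings and the integer rounding of $s$ --- and treating separately the levels that sit essentially at the partition threshold $M=N_h/K_h$ --- produces the stated value $72$. The step I expect to be the real obstacle is the bound on $A$: one must recognize that the right cut-set groups \emph{all} the uncached levels $H'$ together at once and then tunes the number $s$ of participating users, since neither the per-level cut-set nor the all-users cut-set suffices by itself.
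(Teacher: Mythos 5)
Your bound on $B$ is fine, and your overall plan---cut-set bounds only, decompose $R^{\text{SU}}=A+B$, lower-bound each piece---is the right one, but the step you yourself identify as the crux, namely $R^\ast(M)\gtrsim A=\sum_{h\in H'}K_h$, rests on a cut-set inequality that is not valid. The inequality $R^\ast(M)\ge (N_S-sM)/\lceil N_S/s\rceil$ requires exhibiting $b=\lceil N_S/s\rceil$ \emph{legal} demand configurations in which the $s$ chosen users collectively request every file of every level in $S$. Legality in the single-user model means that in each configuration at most $K_i$ of those users request from level $i$, so over $b$ configurations at most $\min\{\min(s,K_i)\,b,\,N_i\}$ distinct files of level $i$ can be made to appear; the decodable total is $\sum_{i\in S}\min\{\min(s,K_i)b,N_i\}$, not $N_S$. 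With $S=H'$ and $s=K_{H'}$ this collapses whenever $H'$ is heterogeneous: take $N_1=10^6,\ K_1=10$ and $N_2=10^4,\ K_2=10^3$ with $M=5$, so both levels lie in $H'$. Then $s=1010$, $b=1000$, and level $1$ contributes only $\min\{K_1b,N_1\}=10^4$ decodable files rather than $10^6$; the configuration certifies only $R^\ast\ge(2\cdot 10^4-5050)/1000\approx 15$, far from your claimed $K_{H'}/4\approx 252$. The same obstruction kills your second sub-case ($s\approx N_{H'}/2M$, $b\approx 2M$): covering level $h$ there needs $K_hb\ge N_h$, i.e.\ $M\gtrsim N_h/2K_h$, which contradicts $h\in H'$ except for levels at the threshold. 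Nor can any single pair $(s,b)$ repair this: forcing $b\ge\max_h N_h/K_h$ forces $s\lesssim N_{H'}/\max_h(N_h/K_h)$, and the bound degenerates to roughly $\max_h K_h$---exactly the weakness you were trying to escape by pooling.

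The missing idea is not a cleverer single $s$ for the pooled set $H'$, but \emph{per-level} cache counts inside one shared-broadcast entropy bound, which is what the paper does. Each level $i$ gets its own disjoint group of $s_i\le K_i$ caches (so every configuration is legal by construction), while all levels share a common block of $b=\lceil 6M\rceil$ broadcasts; Fano then gives $bR+(\sum_i s_i)M\ge\sum_i\min\{s_ib,N_i\}$, i.e.\ the additive bound $R^\ast(M)\ge\sum_i s_i(\min\{1,N_i/s_ib\}-M/b)$. Choosing $s_h=\lceil K_h/6\rceil$ for uncached levels with $K_h\ge 6$ yields a contribution of at least $K_h/72$ per level, and summing recovers $A/72$ without ever pooling files across levels with different $N_h/K_h$. (Uncached levels with $K_h\le 5$, the partially cached levels, and the nearly-fully-cached ones each get their own choice of $s_i$, which is where the refined partition and the constant $72$ come from; only the small levels in $J$ are pooled, and there the condition $b\ge N_j$ makes pooling legal.) Your $B$ bound survives essentially because for $i\in I'$ one has $b\approx 2M\ge 2N_i/K_i$, so the per-level constraint is automatically met there---but that is precisely the inequality that fails on $H'$.
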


This result can be proved using cut-set bounds.


\section{Comparison}
\label{sec:comparison}

In this section, we first compare the memory-sharing and the clustering strategies, and we explore the dichotomy among the two setups that is emphasized by the difference between strategies.
We will then discuss why such a dichotomy exists, and explain the need for different lower bounds for each setup.
Finally, we explore a new problem that combines both setups by including both multi-user and single-user levels.

\subsection{Comparing the two caching-and-delivery strategies}

We have previously argued that memory-sharing is the best scheme to use in the multi-user case, while clustering is the near-optimal strategy in the single-user case.
However, why could one (or both) of these schemes not be good enough for both situations?
We will show, in this section, how that is not the case: memory-sharing can give an $L$-dependent gap between its rate and the optimum when used in the single-level setup; meanwhile, the rate achieved by clustering in the multi-user case can be arbitrarily far from the optimal rate.
We give examples of these two cases.

Consider a multi-user setup with two levels such that $(N_1,N_2)=(2^{5r},2^{8r})$ and $(U_1,U_2)=(2^{4r},2^r)$ for some $r>0$.
Suppose that there is enough memory so that both levels are to be partially stored in the caches.
With the memory-sharing scheme, that would give a rate of approximately:
\begin{IEEEeqnarray*}{rCl}
R
&\approx& \frac{\left( \sqrt{N_1U_1}+\sqrt{N_2U_2} \right)^2}{M}
= \frac1M \cdot \Theta\left( 2^{9r} \right).
\end{IEEEeqnarray*}
On the other hand, if we had clustered the two levels into one, then this super-level would have $(N_1+N_2)$ files and $(U_1+U_2)$ users per cache, resulting in the following rate:
\begin{IEEEeqnarray*}{rCl}
R
&\approx& \frac{(N_1+N_2)(U_1+U_2)}{M}
= \frac1M \cdot \Theta\left( 2^{12r} \right).
\end{IEEEeqnarray*}
Clearly, the latter rate is about $2^{3r}$ times as large as the former, a ratio that can get arbitrarily large as $r$ increases.

This difference is most pronounced when the popularities of the two levels become significantly different (in the above example, the popularity ratio was $\frac{U_1/N_1}{U_2/N_2}=2^{6r}$).
Intuitively, if the two levels had similar popularities, then memory-sharing gives them similar amounts of memory, effectively merging them.
However, if their popularities were very different, then they should be given highly unequal portions of the memory.

Consider now the single-user case with $L$ levels, and suppose again that the memory is such that all levels will be partially stored.
Let us assume that $N_1=\cdots=N_L$.
Using the clustering scheme, we get the following approximate rate:
\[
R \approx \frac{N_1+\cdots+N_L}{M}
= \frac{LN_1}{M}.
\]
However, with memory-sharing, we would get:
\[
R \approx \frac{\left( \sqrt{N_1}+\cdots+\sqrt{N_L} \right)^2}{M}
= \frac{L^2N_1}{M},
\]
which is larger by a factor of $L$.
Essentially, we are sending $L$ broadcasts, one per level, when we could send just one broadcast for all $L$ levels.

\subsection{Analysis of the dichotomy between the setups}

The dichotomy between the two extremes is striking.
They require drastically different strategies, and the strategy that is good for one setup is not so for the other.
This suggests a fundamental difference between the two setups.

To understand this difference, consider what happens when sending a coded broadcast message.
Each message targets a specific subset of users.
If, in this subset, there exist two users that are connected to the same cache, then these users have access to the exact same side information.
As a result, no coding can be done across these two users, and there is hence no use in including them in the same broadcast.

With that in mind, consider again \figurename~\ref{fig:setup-F} and \figurename~\ref{fig:setup-G}.
Notice how, in the multi-user setup, there are multiple rows of users, each of which consists of users from the \emph{same} popularity level.
Each such row is a complete set of users with no common caches: any additional users would have access to the same cache as some other user.
Thus, it is sufficient to consider them in a broadcast transmission that is separate from all other rows.
Since, as a result, no two levels will share the same broadcast message, it can only be beneficial to choose the best possible division of the memory, based on popularities.

In the single-user setup, however, there is only one row of users that contains all the users from all the levels.
It is hence possible to generate coding opportunities across levels.
Merging is thus a better option in this situation, and merging is most efficient when all levels receive equal memory per file.

\subsection{The difference in the lower bounds}

The reason different types of lower bounds are needed for the two setups is similar to the reason for the dichotomy in their respective caching-and-delivery strategies.
In the single-level setup studied in \cite{maddah-ali2012}, cut-set bounds were given to lower-bound the optimal rate.
Depending on the value of the rate, a certain number of caches were considered and used in the cut-set bounds.

When transitioning to the multi-level, multi-user scenario, we get a concatenation of broadcast messages, resulting in a sum of single-level rates.
Since these rates have potentially different values, each requires a cut-set bound that considers a different number of caches.
Thus we need lower bounds that consist of sums of cut-set bounds, each considering a different number of caches; a single cut-set bound is not enough.

However, in the single-user case, we are again faced with a single broadcast message to all users.
Thus, one cut-set bound is enough to give a lower bound on this rate.

\subsection{Mixing the setups}

So far, we have looked at the two extremes: either all levels were represented at all the caches, or none of them were.
A natural problem arises: that of studying intermediate cases.
The simplest form such intermediate cases can take is one where levels of both types are present.

Specifically, let there be two classes of popularity levels: $\mathcal{F}$ and $\mathcal{G}$.
The class $\mathcal{F}$ consists of levels $i$ that are represented by exactly $U_i$ users \emph{at every cache}.
In contrast, there is exactly one row of users that represents all the levels in the class $\mathcal{G}$: each level $i\in\mathcal{G}$ is represented by $K_i$ of those users.

The most natural strategy to employ in this situation would be to superpose the multi-user and the single-user strategies.
In particular, we divide the memory $M$ into $\gamma M$ and $(1-\gamma)M$, for some $\gamma\in[0,1]$.
We give the first part to $\mathcal{F}$ and the second part to $\mathcal{G}$, and apply their respective strategies on their part of the memory.
We believe this to be the best strategy, but proving its order-optimality requires developing new lower bounds that consider levels of both classes at the same time; this is part of our on-going work.

\bibliographystyle{IEEEtran}
\bibliography{caching}

\appendices
\section{Proofs for the multi-user case}

\subsection{Elaboration on the achievability (Theorem~\ref{thm:multi-user-achievability})}
\label{app:multi-user-achievability}

We here elaborate on the achievability of the multi-user setup, as the details will be important when proving Theorem~\ref{thm:multi-user-gap} in Appendix~\ref{app:multi-user-gap}.
The analysis is slightly different from the one in \cite{MLcodedcaching}.

As discussed in Section~\ref{sec:preliminaries}, the strategy involves finding a good partition $(H,I,J)$ of the set of levels.
Below we give such a partition.
\begin{definition}[$M$-feasible partition]
\label{def:m-feasible}
For any cache memory $M$, an $M$-feasible partition $(H,I,J)$ of the set of levels is a partition that satisfies:
\begin{IEEEeqnarray*}{lCl/rCcCl}
\forall h &\in& H, &
&& \tilde M &<& (1/K)\sqrt{N_h/U_h};\\
\forall i &\in& I, &
(1/K)\sqrt{N_i/U_i} &\le& \tilde M &\le& \left( 1+1/K \right)\sqrt{N_i/U_i};\\
\forall j &\in& J, &
\left( 1 + 1/K \right)\sqrt{N_j/U_j} &<& \tilde M,
\end{IEEEeqnarray*}
where $\tilde M = (M-T_J+V_I)/S_I$, and, for any subset $A$ of the levels:
\[
S_A = \sum_{i\in A} \sqrt{N_iU_i};
\quad
T_A = \sum_{i\in A} N_i;
\quad
V_A = \sum_{i\in A} N_i/K.
\]
\end{definition}
Such a partition always exists.
Furthermore, the set $I$ is never empty as long as $M\le\sum_iN_i$, \emph{i.e.}, as long as the caches do not have enough memory to store everything.

After choosing an $M$-feasible partition, we share the memory among the levels as follows:
\begin{IEEEeqnarray*}{lCl"rCl}
\forall h &\in& H, & \alpha_hM &=& 0;\\
\forall i &\in& I, & \alpha_iM &=& \sqrt{N_iU_i}\cdot\tilde M - N_i/K;\\
\forall j &\in& J, & \alpha_jM &=& N_j.
\end{IEEEeqnarray*}
It can be easily verified that $\alpha_i\in[0,1]$ for all $i$, and $\sum_i\alpha_i=1$.
Furthermore, we can also show that $\alpha_iM\le N_i$ for all $i$.

To properly analyze the achievable rate, we need to look more closely at the set $I$.
In the single-level scenario in \cite{maddah-ali2012,maddah-ali2013}, three regimes were identified, and they were analyzed separately.
These were: when $M<N/K$, when $M>cN$ for some constant $c\in(0,1)$, and the intermediate case.
We identify three similar regimes for each level in $i$.
Formally, let $\beta=1/80$, and define:
\begin{IEEEeqnarray*}{rCl}
I_0 &=& \left\{ i\in I : M < (2/K)\sqrt{N_i/U_i} \right\};\\
I_1 &=& \left\{ i\in I : M > (\beta+1/K)\sqrt{N_i/U_i} \right\};\\
I'  &=& I\setminus(I_0\cup I_1),
\IEEEyesnumber\label{eq:refined-partition}
\end{IEEEeqnarray*}
In other words, $I_0$ is the set of levels $i$ such that $\alpha_iM<N_i/K$, $I_1$ is such that $\alpha_iM>\beta N_i$ for all $i\in I_1$, and $I'$ is the rest.
For convenience, we call the resulting partition $(H,I_0,I',I_1,J)$ a \emph{refined $M$-feasible partition}.
Note that, because of regularity condition \eqref{eq:mu-reg2}, we have that $I_1$ will, at all times, contain at most one level.

We will now give upper bounds on the rates achieved individually for each level.
\begin{lemma}
\label{lemma:multi-user-achievability}
Given a refined $M$-feasible partition $(H,I_0,I',I_1,J)$, the individual rates of the levels are upper-bounded by:
\begin{IEEEeqnarray*}{lCl"rCl}
\forall h &\in& H, & R_h(M) &=& KU_h;\\
\forall i &\in& I_0\cup I', & R_i(M) &\le& \frac{2S_I\sqrt{N_iU_i}}{M-T_J+V_I};\\
\forall i &\in& I_1, &
R_i(M) &\le& \frac1\beta U_i\left( 1 - \frac{M-T_J}{N_i} \right)\\
&& & && {}+ \frac1\beta U_i\frac{S_{I_0}+S_{I'}}{\sqrt{N_iU_i}};\\
\forall j &\in& J, & R_j(M) &=& 0.
\end{IEEEeqnarray*}
\end{lemma}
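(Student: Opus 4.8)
The plan is to invoke the single-level achievable rate from Lemma~\ref{lemma:single-level} for each level separately, with the memory $\alpha_i M$ that the memory-sharing scheme assigns, and then to bound the resulting expression case-by-case according to which piece of the refined partition the level belongs to. For $h\in H$ we have $\alpha_h M=0$, so Lemma~\ref{lemma:single-level} gives $R_h=U_h\cdot\min\{N_h/0,K\}\cdot(1-0)=KU_h$; and for $j\in J$ we have $\alpha_j M=N_j$, so the factor $(1-\alpha_jM/N_j)=0$ kills the rate, giving $R_j=0$. These two are immediate. The substance is in the levels of $I$, and the key algebraic fact to keep at hand is that for $i\in I$ the assigned memory is $\alpha_i M=\sqrt{N_iU_i}\,\tilde M-N_i/K$ with $\tilde M=(M-T_J+V_I)/S_I$, so that the ``effective'' per-file memory satisfies $\alpha_iM+N_i/K=\sqrt{N_iU_i}\,\tilde M$, i.e.\ the denominator $M-T_J+V_I$ appears naturally once one writes $\tilde M$ out.

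For $i\in I_0\cup I'$ the goal is the bound $R_i(M)\le 2S_I\sqrt{N_iU_i}/(M-T_J+V_I)$. Here I would use the branch $R^{\text{SL}}(\alpha_iM,K,N_i,U_i)=U_i\cdot(N_i/\alpha_iM)\cdot(1-\alpha_iM/N_i)\le U_iN_i/\alpha_iM$ (valid when $N_i/\alpha_iM\le K$, which is exactly what defines $I_0$ and is the relevant regime for $I'$ as well up to the constant). Substituting $\alpha_iM=\sqrt{N_iU_i}\,\tilde M-N_i/K$ and using the definition of $I_0$, namely $\tilde M\,$ comparable to $(1/K)\sqrt{N_i/U_i}$ — more precisely that for $i\in I_0$ one has $\alpha_iM<N_i/K$, hence $\sqrt{N_iU_i}\,\tilde M<2N_i/K$, i.e. $\alpha_iM>\tfrac12\sqrt{N_iU_i}\,\tilde M$ — gives $U_iN_i/\alpha_iM< 2U_iN_i/(\sqrt{N_iU_i}\,\tilde M)=2\sqrt{N_iU_i}/\tilde M=2S_I\sqrt{N_iU_i}/(M-T_J+V_I)$. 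For $i\in I'$, which by definition sits between the $I_0$ threshold $(2/K)\sqrt{N_i/U_i}$ and the $I_1$ threshold $(\beta+1/K)\sqrt{N_i/U_i}$, one is in the ``intermediate'' regime: here $\alpha_iM\le N_i$ so $N_i/\alpha_iM\ge1$, but also $\alpha_iM$ is a constant fraction of $\sqrt{N_iU_i}\,\tilde M$ (the lower threshold ensures $\alpha_iM\ge N_i/K$), and $\min\{N_i/\alpha_iM,K\}\le N_i/\alpha_iM$ still, so the same chain of inequalities applies with the constant $2$ absorbing the slack — the point being that on $I'$, $\tilde M$ is within a bounded ratio of $(1/K)\sqrt{N_i/U_i}$, so $\alpha_iM$ is within a bounded ratio of $\sqrt{N_iU_i}\,\tilde M$.

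For the single level $i\in I_1$ (recall condition \eqref{eq:mu-reg2} forces $|I_1|\le1$) the target is $R_i(M)\le \tfrac1\beta U_i(1-(M-T_J)/N_i)+\tfrac1\beta U_i(S_{I_0}+S_{I'})/\sqrt{N_iU_i}$. Here $M$ is large relative to $\sqrt{N_i/U_i}$, so $N_i/\alpha_iM$ may drop below $K$; I would start from $R^{\text{SL}}(\alpha_iM,K,N_i,U_i)\le U_i(N_i/\alpha_iM)(1-\alpha_iM/N_i)=U_i(N_i/\alpha_iM-1)$, and then lower-bound $\alpha_iM$. Writing $\alpha_iM=\sqrt{N_iU_i}\,\tilde M-N_i/K=M-T_J+V_I-\sum_{k\in I\setminus I_1}\alpha_kM-N_i/K$ and noting $V_I-N_i/K=V_{I\setminus I_1}$ while $\alpha_kM\le\sqrt{N_kU_k}\,\tilde M$ and $\tilde M\le(\beta+1/K)\sqrt{N_i/U_i}$ on $I_1$, one gets $\alpha_iM\ge M-T_J-(\beta+1/K)\sqrt{N_i/U_i}\,(S_{I_0}+S_{I'})+\text{(nonneg.)}\ge (M-T_J)(1-o(1))$-type control; more simply, since $i\in I_1$ means $\alpha_iM>\beta N_i$, we have $N_i/\alpha_iM<1/\beta$, so $R_i\le U_i(1/\beta)(1-\alpha_iM/N_i)$, and then $1-\alpha_iM/N_i=1-(M-T_J+V_I-\sum_{k\neq i}\alpha_kM-N_i/K)/N_i\le 1-(M-T_J)/N_i+(S_{I_0}+S_{I'})\tilde M/N_i\le 1-(M-T_J)/N_i+(S_{I_0}+S_{I'})/\sqrt{N_iU_i}$, using $\tilde M\le(\beta+1/K)\sqrt{N_i/U_i}\le\sqrt{N_i/U_i}$ for the last step. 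Multiplying by $U_i/\beta$ yields exactly the claimed two terms.

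The main obstacle I anticipate is not any single estimate but keeping the constants honest across the three regimes simultaneously: the thresholds defining $I_0$, $I'$, $I_1$ (with the specific $\beta=1/80$ and the $1/K$ corrections) have to line up so that (a) on $I_0\cup I'$ the factor $N_i/\alpha_iM\ge 1$ while $\alpha_iM\ge\tfrac12\sqrt{N_iU_i}\,\tilde M$, and (b) on $I_1$ the factor $N_i/\alpha_iM<1/\beta$; verifying these from $\alpha_iM=\sqrt{N_iU_i}\,\tilde M-N_i/K$ and the membership conditions is the delicate bookkeeping, and it is also where the regularity condition \eqref{eq:mu-reg2} must be used to guarantee $|I_1|\le1$ (so that the $I_1$ bound is about a single level and the sums $S_{I_0},S_{I'}$ genuinely exclude it).
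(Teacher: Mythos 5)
The paper does not actually include a proof of this lemma (it is explicitly skipped for lack of space), so there is no official argument to compare against; your overall plan --- apply Lemma~\ref{lemma:single-level} level-by-level with the memory $\alpha_iM$ assigned by the memory-sharing scheme and split cases along the refined partition --- is certainly the intended one, and the $H$, $J$, and $I'$ cases are handled correctly. In the $I_1$ case the structure is right, but the step ``$\tilde M\le(\beta+1/K)\sqrt{N_i/U_i}$'' is not justified: membership in $I_1$ gives a \emph{lower} bound on $M$, while the upper bound on $\tilde M$ supplied by $M$-feasibility is $(1+1/K)\sqrt{N_i/U_i}$, so your argument delivers the stated bound only up to an extra factor $1+1/K$.

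The genuine gap is in the $I_0$ case. You describe $I_0$ as the regime where $N_i/\alpha_iM\le K$; it is the opposite regime, $\alpha_iM<N_i/K$. More seriously, the deduction ``$\alpha_iM<N_i/K$, hence $\sqrt{N_iU_i}\tilde M<2N_i/K$, i.e.\ $\alpha_iM>\frac12\sqrt{N_iU_i}\tilde M$'' points the wrong way: $\sqrt{N_iU_i}\tilde M<2N_i/K$ means $N_i/K>\frac12\sqrt{N_iU_i}\tilde M$, hence $\alpha_iM=\sqrt{N_iU_i}\tilde M-N_i/K<\frac12\sqrt{N_iU_i}\tilde M$. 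Indeed, no inequality of the form $U_iN_i/\alpha_iM\le C\,\sqrt{N_iU_i}/\tilde M$ can hold on $I_0$: as $\tilde M$ approaches the feasibility floor $(1/K)\sqrt{N_i/U_i}$ one has $\alpha_iM\to0$ and $U_iN_i/\alpha_iM\to\infty$. For $i\in I_0$ you must take the other branch of the minimum, $R_i\le KU_i(1-\alpha_iM/N_i)\le KU_i$, and then show $KU_i\le2\sqrt{N_iU_i}/\tilde M$, i.e.\ $\tilde M\le(2/K)\sqrt{N_i/U_i}$ --- which is where the defining condition $M<(2/K)\sqrt{N_i/U_i}$ of $I_0$ must enter, via a translation between conditions on $M$ and conditions on $\tilde M=(M-T_J+V_I)/S_I$ that your write-up leaves implicit in all three $I$-cases.
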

For lack of space, we skip the proof of Lemma~\ref{lemma:multi-user-achievability}.

\subsection{Information-theoretic lower bounds}
\label{app:multi-user-converse}

The lower bounds we use are a sum of cut-set bounds.
Each such cut-set bound matches a popularity level:
it considers a certain number of caches that depends on the individual rate of the level, and then decodes files from just that level, resulting in an expression that matches the individual rate of the level.
To obtain this sum of different cut-set bounds, the \emph{sliding-window subset entropy inequality} \cite[Theorem~3]{Liu14} is used.
\begin{lemma}[Sliding-window subset entropy inequality {\cite[Theorem~3]{Liu14}}]
\label{lemma:sliding-window}
Given $K$ random variables $(Y_1,\ldots,Y_K)$, we have, for ever $s\in\{1,\ldots,K-1\}$:
\[
\frac1s\sum_{i=1}^K H\left(Y_i,\ldots,Y_{\langle i+s-1\rangle}\right)
\ge
\frac{1}{s+1}\sum_{i=1}^K
H\left( Y_i,\ldots,Y_{\langle i+s\rangle} \right),
\]
where we define $\langle i\rangle = i$ if $i\le K$ and $\langle i\rangle = i-K$ if $i>K$.
\end{lemma}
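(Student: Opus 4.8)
The plan is to prove the inequality by reducing it to a statement about the \emph{discrete concavity} of the aggregate window-entropy and obtaining that concavity from the submodularity of the entropy functional. For a window length $s\in\{0,1,\ldots,K\}$ write $W_i^{(s)}=(Y_i,\ldots,Y_{\langle i+s-1\rangle})$ for the cyclic block of $s$ consecutive variables starting at index $i$, with $W_i^{(0)}$ the empty tuple, and set $g(s)=\sum_{i=1}^K H\!\left(W_i^{(s)}\right)$. The claimed inequality is exactly $g(s)/s \ge g(s+1)/(s+1)$, so it suffices to show that $s\mapsto g(s)/s$ is non-increasing. First I would record the base fact $g(0)=0$, since the empty tuple has zero entropy.

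The engine of the argument is the submodularity of entropy: for any jointly distributed collections $A$ and $B$, $H(A)+H(B)\ge H(A\cup B)+H(A\cap B)$, which is just the nonnegativity of conditional mutual information. I would apply it to the two \emph{overlapping} length-$s$ windows $A=W_i^{(s)}$ and $B=W_{\langle i+1\rangle}^{(s)}$. Their index sets are $\{i,\ldots,i+s-1\}$ and $\{i+1,\ldots,i+s\}$ read cyclically, so their union is the length-$(s+1)$ window $W_i^{(s+1)}$ and their intersection is the length-$(s-1)$ window $W_{\langle i+1\rangle}^{(s-1)}$. Submodularity then gives, for each $i$, the pointwise bound $H\!\left(W_i^{(s)}\right)+H\!\left(W_{\langle i+1\rangle}^{(s)}\right)\ge H\!\left(W_i^{(s+1)}\right)+H\!\left(W_{\langle i+1\rangle}^{(s-1)}\right)$.

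Summing this over $i=1,\ldots,K$ and exploiting the cyclic shift-invariance of the index $\langle i+1\rangle$, which permutes $\{1,\ldots,K\}$ bijectively, each of the four sums collapses to one of the aggregates $g(\cdot)$, yielding $2\,g(s)\ge g(s+1)+g(s-1)$. This is precisely the statement that the sequence $g$ is concave, i.e.\ that the increments $\Delta(m)=g(m+1)-g(m)$ are non-increasing in $m$. From $g(0)=0$ I would then write $g(s)=\sum_{m=0}^{s-1}\Delta(m)\ge s\,\Delta(s)$, since every increment in the sum has index below $s$ and is therefore at least $\Delta(s)$; rearranging $g(s)\ge s\,(g(s+1)-g(s))$ gives $(s+1)\,g(s)\ge s\,g(s+1)$, which is the desired monotonicity of $g(s)/s$ and hence the lemma.

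The only delicate points, which I would verify carefully rather than gloss over, are the bookkeeping of the cyclic index sets in the union/intersection identification and the boundary case $s=1$, where the intersection window $W_{\langle i+1\rangle}^{(s-1)}$ is empty: there submodularity degenerates to plain subadditivity $H(Y_i)+H(Y_{\langle i+1\rangle})\ge H(Y_i,Y_{\langle i+1\rangle})$, which is consistent with $g(0)=0$ and causes no trouble. I expect the main conceptual obstacle to be simply recognizing that the \emph{right} pair to feed into submodularity is the two maximally overlapping windows, sharing all but their endpoints; once that choice is made, the cyclic summation and the concavity-to-ratio deduction are routine.
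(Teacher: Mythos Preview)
The paper does not supply its own proof of this lemma; it simply quotes the result from \cite[Theorem~3]{Liu14} and uses it as a black box inside the proof of Lemma~\ref{lemma:multi-user-converse}. So there is no in-paper argument to compare against.

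That said, your proposal is a correct and complete proof. The submodularity step applied to the two maximally overlapping windows is exactly right: for $1\le s\le K-1$ the cyclic index sets $\{i,\ldots,i+s-1\}$ and $\{i+1,\ldots,i+s\}$ have union $\{i,\ldots,i+s\}$ and intersection $\{i+1,\ldots,i+s-1\}$ regardless of wrap-around, yielding $2g(s)\ge g(s+1)+g(s-1)$ after the cyclic sum. The passage from discrete concavity with $g(0)=0$ to monotonicity of $g(s)/s$ is standard and your chain $g(s)=\sum_{m<s}\Delta(m)\ge s\,\Delta(s-1)\ge s\,\Delta(s)$ is valid since the increments are non-increasing. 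The boundary case $s=1$ is handled correctly as plain subadditivity. This is in fact essentially the argument used in the original reference, so you have independently reconstructed the intended proof.
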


The resulting lower bounds are given in the following lemma.
\begin{lemma}
\label{lemma:multi-user-converse}
Consider the multi-level, multi-user caching setup.
Let $b\in\mathbb{N}^+$ and $t\in\{1,\ldots,K\}$, and let $s_i\in\{1,\ldots,\floor{K/2t}\}$ for every level $i$.
Then, for every memory $M$, the optimal rate can be bounded from below by:
\[
R^\ast(M)
\ge \sum_{i=1}^L \min\left\{ s_itU_i \,,\,\frac{N_i}{s_ib} \right\}
- \frac tb M.
\]
\end{lemma}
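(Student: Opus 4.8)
The plan is to build, for each popularity level $i$ separately, a cut-set--type bound that isolates the files of that level, and then to stitch all $L$ of these bounds into a single inequality using the sliding-window subset entropy inequality of Lemma~\ref{lemma:sliding-window}. First I would fix the parameters $b$, $t$, and the $s_i$'s, and consider a sequence of $b$ carefully chosen demand vectors. For a single level $i$, the classical cut-set argument says: pick a contiguous block of $s_i t$ caches; the users attached to those caches (there are $s_i t U_i$ of them requesting from level $i$, since $U_i$ per cache) together with $b$ successive broadcast transmissions can decode $\min\{s_i t U_i,\ N_i\}$ worth of distinct level-$i$ files; more precisely, over $b$ rounds one can recover $\min\{s_i t U_i \cdot b,\ N_i\}$ files, so normalizing gives roughly $R^\ast(M) \cdot b \ge \min\{s_i t U_i\, b,\ N_i\} - s_i t \cdot M$, i.e. an $s_i t$-cache cut-set bound. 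The subtlety, and the reason the factor $s_i b$ rather than $b$ appears in the statement, is that to make the per-level bounds \emph{compatible} so they can be summed, each level's bound must be expressed over a common ``effective'' number of broadcast rounds; level $i$ effectively uses its $b$ rounds but the sliding window of width $s_i$ in cache-space forces the $N_i/(s_i b)$ scaling.

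The heart of the argument is the aggregation step. After writing each single-level cut-set inequality in terms of joint entropies of cache contents $Z_k$ and broadcast messages $X$, I would express the left-hand sides as sums of the form $\sum_k H(Z_k,\ldots,Z_{\langle k+s_i t-1\rangle}, X^{(b)})$ over cyclic windows of caches. Here I would invoke Lemma~\ref{lemma:sliding-window} repeatedly: it lets one trade a sum of entropies of width-$s$ windows for a sum of width-$(s+1)$ windows at a controlled cost, and by telescoping from $s=1$ up to the various $s_i t$ one obtains a single master inequality whose left-hand side is controlled by $K \cdot (\text{total rate and memory})$ and whose right-hand side is $\sum_i \min\{s_i t U_i b,\ N_i\}$ up to the appropriate normalization. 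The condition $s_i \le \lfloor K/2t\rfloor$ is exactly what guarantees the windows of total width $s_i t$ (and the disjointness needed to cover all $L$ levels with non-overlapping decoding regions, or at least with overlaps that the sliding-window inequality can absorb) fit inside the cyclic group of $K$ caches without wrapping catastrophically.

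Dividing through by $K b$ and rearranging yields
\[
R^\ast(M) \ge \sum_{i=1}^L \min\left\{ s_i t U_i,\ \frac{N_i}{s_i b} \right\} - \frac{t}{b} M,
\]
which is the claimed bound. The main obstacle I anticipate is the bookkeeping in the aggregation step: making sure that the cyclic windows for the $L$ different levels, each with its own width $s_i t$, can be simultaneously accommodated so that one clean application of the sliding-window inequality produces the full sum $\sum_i \min\{\cdot,\cdot\}$ with no cross terms and with the memory penalty appearing only once as $-\tfrac{t}{b}M$ rather than once per level. Getting the normalization exactly right---so that level $i$ contributes $N_i/(s_i b)$ and not, say, $N_i/b$ or $N_i/(s_i t b)$---will require carefully tracking how many times each $Z_k$ and each broadcast round is counted on each side of every invocation of Lemma~\ref{lemma:sliding-window}. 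Once that combinatorial accounting is pinned down, the rest is the standard entropy manipulation (Fano, chain rule, and the fact that $H(X^{(b)}) \le bRF$ and $H(Z_k) \le MF$).
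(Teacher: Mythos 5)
Your high-level strategy is the right one --- the paper's proof does exactly what you outline in spirit: per-level Fano decoding stitched together by repeated application of Lemma~\ref{lemma:sliding-window}, with the levels ordered by increasing $s_i$ and the window grown telescopically from $s_1$ up to $s_L$, decoding one level at each stage conditioned on the previously decoded files, so that the memory penalty $tM$ appears only once. But there is a genuine gap in your construction, and it sits exactly where you flagged uncertainty: the bookkeeping of the broadcasts. You work with a single common batch of $b$ broadcast messages shared by all cache windows. The paper instead attaches a \emph{separate} batch of $b$ broadcasts to each window position, defining the super-variable $Y_i=(\mathcal{Z}^t_i,\mathcal{X}^b_i)$ with $\mathcal{X}^b_i=(X_{i,1},\ldots,X_{i,b})$, and applies the sliding-window inequality to these super-variables. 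A width-$s_i$ window then contains $s_i t$ caches \emph{and} $s_i b$ distinct broadcasts, so Fano recovers $p_i=\min\{s_i t\cdot U_i\cdot s_i b,\,N_i\}$ files, and the $1/s_i$ normalization supplied by the sliding-window inequality turns this into $\min\{s_i t U_i b,\,N_i/s_i\}$, which after dividing by $b$ is exactly the claimed $\min\{s_i t U_i,\,N_i/(s_i b)\}$.

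With your accounting --- $s_i t$ caches but only $b$ broadcasts in the enlarged window --- the decodable count is $\min\{s_i t U_i b,\,N_i\}$, and the unavoidable $1/s_i$ normalization from Lemma~\ref{lemma:sliding-window} then yields $\min\{t U_i b,\,N_i/s_i\}$ per level, i.e.\ the first argument loses a factor of $s_i$ and the resulting bound is strictly weaker than the lemma; no amount of careful counting recovers it without replicating the broadcasts per window position. Relatedly, your concern about arranging ``non-overlapping decoding regions'' for the $L$ levels points toward disjoint cache sets, which is the single-user proof's device; in the multi-user setting every cache hosts users of every level, and the correct move is nested windows over the \emph{same} caches with sequential conditioning on $\mathcal{W}^1,\ldots,\mathcal{W}^{i-1}$, so no disjointness is needed and no cross terms arise. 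The constraint $s_i\le\floor{K/2t}$ is there simply to keep the largest window $s_i t$ within the range where the cyclic sliding-window inequality applies with distinct caches.
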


\begin{proof}
Without loss of generality, assume $s_1\le\cdots\le s_L$.
For every $i$ and $m$, define $\mathcal{Z}^m_i = (Z_i,\ldots,Z_{\langle i+m-1\rangle})$, where $Z_j$ denotes the $j$-th cache.
For every $i$, we consider the $t$ caches $\mathcal{Z}^t_i$ as well as $b$ broadcast messages $\mathcal{X}^b_i=(X_{i,1},\ldots,X_{i,b})$.
We have:
\[
bR + tM
\ge \frac1K \sum_{i=1}^K H\left( \mathcal{Z}^t_i,\mathcal{X}^b_i \right).
\]
By defining $Y_i=\left( \mathcal{Z}^t_i,\mathcal{X}^b_i \right)$ and rearranging the terms, we can use Lemma~\ref{lemma:sliding-window} to get:
\begin{IEEEeqnarray*}{rCll}
bR + tM
&\ge&
\frac1K \cdot \frac{1}{s_1} \sum_{i=1}^K &H\left(
\mathcal{Z}^t_i,
\mathcal{Z}^t_{\langle i+t \rangle},
\ldots,
Z^t_{\langle i+(s_1-1)t \rangle},
\right.\\
&& &\qquad
\left.
\mathcal{X}^b_i,
\mathcal{X}^b_{\langle i+t \rangle},
\ldots,
\mathcal{X}^b_{\langle i+(s_1-1)t \rangle}
\right).
\end{IEEEeqnarray*}
Thus, we have move from entropy terms with $t$ caches and $b$ broadcast messages to entropy terms with $s_1t$ caches and $s_1b$ broadcasts.
For simplicity, we write these entropy terms as $\hat H(s_1t,s_1b)$.
By applying a similar process for $i$ increasing from $1$ to $L$, we can get entropy terms of the form $\hat H(s_it,s_ib)$.
We use each such term to decode a certain number $p_i$ of files from level $i$
Since there are $s_it$ caches, $U_i$ level-$i$ users per cache, and $s_ib$ broadcasts, the total number of files that can be recovered is: $p_i=\min\{s_it\cdot U_i\cdot s_ib , N_i\}$.
Let $\mathcal{W}^i$ be this set of files.
Then:
\begin{IEEEeqnarray*}{rCl}
bR+tM
&\ge& \frac1K \sum_{i=1}^K \hat H(t,b)\\
&\overset{(a)}{\ge}& \frac1K \cdot \frac{1}{s_1} \sum_{i=1}^K \hat H(s_1t,s_1b)\\
&\overset{(b)}{\ge}& \frac1K \cdot \frac{1}{s_1} \sum_{i=1}^K \left(\hat H(s_1t,s_1b|\mathcal{W}^1) + p_1\right)\\
&\overset{(a)}{\ge}& \frac1K \cdot \frac{1}{s_2} \sum_{i=1}^K \hat H(s_2t,s_2b|\mathcal{W}^1) + \frac{p_1}{s_1}\\
&\overset{(b)}{\ge}& \frac1K \cdot \frac{1}{s_2} \sum_{i=1}^K \hat H(s_2t,s_2b|\mathcal{W}^1,\mathcal{W}^2) + \frac{p_1}{s_1} + \frac{p_2}{s_2}\\
&\ge& \cdots\\
&\ge& \sum_{i=1}^L \frac{p_i}{s_i}\\
&=& \sum_{i=1}^L \frac{1}{s_i} \min\left\{ s_i^2btU_i,N_i \right\};\\
\implies R
&\ge& \sum_{i=1}^L \min\left\{ s_itU_i \,,\, \frac{N_i}{s_ib} \right\} - \frac{t}{b} M,
\end{IEEEeqnarray*}
where inequalities marked with $(a)$ use the Lemma~\ref{lemma:sliding-window} and those marked with $(b)$ use Fano's inequality.
\end{proof}

\subsection{Proof of approximate optimality (Theorem~\ref{thm:multi-user-gap})}
\label{app:multi-user-gap}

For lack of space, we are unable to give the entire proof of Theorem~\ref{thm:multi-user-gap}.
It consists of several cases that are all analyzed in a similar manner, although with different values of certain parameters.
We will therefore give one such case in the hopes that it will be representative of the remaining cases.
This case is one where: $K\ge96$, $I_1=\emptyset$, and $J\not=\emptyset$.

Consider the lower bounds in Lemma~\ref{lemma:multi-user-converse}.
We will choose the parameters $t$, $b$, and $s_i$ such that the terms in the sum match the upper bounds in Lemma~\ref{lemma:multi-user-achievability}:
\begin{IEEEeqnarray*}{lCl"rCl}
&& & t &=& 1;\\
\forall h&\in&H, & s_h &=& \floor{K/8};\\
\forall i_0&\in&I_0, & s_{i_0} &=& \floor{(1/16)
  \frac{S_I\sqrt{N_{i_0}/U_{i_0}}}{M-T_J+V_I}};\\
\forall i&\in&I', & s_i &=& \floor{(1/8)
  \frac{S_I\sqrt{N_i/U_i}}{M-T_J+V_I}};\\
\forall j&\in&J, & s_j &=& 1;\\
&& & b &=& \floor{64\frac{(M-T_J+V_I)^2}{S_I^2}}.
\end{IEEEeqnarray*}

The first thing to do is to verify that these parameters satisfy their constraints.
The variables $t$ and $\{s_j\}_{j\in J}$ trivially do.

For $h\in H$, we have $s_h=\floor{K/8}\ge\floor{96/8}\ge1$ and $s_ht\le K$, thus $s_h$ satisfies all the constraints.

For $\{s_i\}_{i\in I'}$, we have, by \eqref{eq:refined-partition}:
\[
\frac{(1/8) S_I\sqrt{N_i/U_i}}{M-T_J+V_I}
\ge \frac{(1/8)\sqrt{N_i/U_i}}{(1/K+\beta)\sqrt{N_i/U_i}}
\ge 1,
\]
and hence $s_i\ge1$.
Furthermore:
\[
s_it
\le \frac{(1/8) S_I\sqrt{N_i/U_i}}{M-T_J+V_I}
\le \frac{(1/8)\sqrt{N_i/U_i}}{(2/K)\sqrt{N_i/U_i}}
\le K/2.
\]
Therefore, $s_i$ satisfies the constraints for $i\in I'$.
We can likewise show that, for $i_0\in I_0$, we have $s_{i_0}\ge1$ and $s_{i_0}t\le K/2$.

Finally, regarding the parameter $b$, we use the fact that there exists some level $j\in J$ to say:
\[
64\frac{(M-T_J+V_I)^2}{S_I^2}
\ge 64\left( \sqrt{N_j/U_j} \right)^2
= 64 N_j/U_j \ge 1,
\]
and hence $b\ge1$.

We must now evaluate the expression in Lemma~\ref{lemma:multi-user-converse} using these parameters.
For convenience, define $A_i=\min\{s_itU_i,N_i/s_ib\}$ for each level $i$.
We will lower-bound the value of $A_i$ for every $i$, which requires evaluating the following comparison:
\begin{equation}
\label{eq:mu-comparison}
bs_i^2t \overset{?}{\lessgtr} N_i/U_i.
\end{equation}

Consider $h\in H$.
Using Definition~\ref{def:m-feasible}, we can evaluate \eqref{eq:mu-comparison}:
\begin{IEEEeqnarray*}{rCl}
bs_h^2t
&\le& 64\frac{(M-T_J+V_I)^2}{S_I^2}\cdot(K/8)^2\cdot1\\
&\le& 64\cdot\frac{1}{K^2}\cdot\frac{N_h}{U_h}\cdot K^2/64\\
&\le& \frac{N_h}{U_h},
\end{IEEEeqnarray*}
which implies:
\begin{IEEEeqnarray*}{rCl}
A_h
&\ge& s_htU_h
= \floor{K/8}U_h
\ge (K/8-1)U_h\\
&\ge& KU_h \cdot (1/8-1/96)
= (11/96) KU_h.
\end{IEEEeqnarray*}

For $i_0\in I_0$, a similar process gives:
\[
A_{i_0}
\ge s_{i_0}tU_{i_0}
\ge (1/24) \frac{S_I\sqrt{N_{i_0}U_{i_0}}}{M-T_J+V_I},
\]
and, for $i\in I'$,
\[
A_i \ge s_itU_i
\ge (49/480) \frac{S_I\sqrt{N_iU_i}}{M-T_J+V_I}.
\]

Finally, for $j\in J$, we have:
\[
bs_j^2t = b \ge 32\frac{(M-T_J+V_I)^2}{S_I^2} \ge \frac{N_j}{U_j},
\]
again using Definition~\ref{def:m-feasible}.
As a result, we have $A_j=N_j/b$.

By combining all the $A_i$ values together, we get the following lower bound on the optimal rate (recall the values of $S_I$ and $T_J$ from Definition~\ref{def:m-feasible}):
\begin{IEEEeqnarray*}{rCl}
R^\ast(M)
&\ge& \sum_{h\in H} (11/96) KU_h
+ \sum_{i_0\in I_0} (1/24) \frac{S_I\sqrt{N_{i_0}U_{i_0}}}{M-T_J+V_I}\\
&& {} + \sum_{i\in I'} (49/480) \frac{S_I\sqrt{N_iU_i}}{M-T_J+V_I}
+ \sum_{j\in J} \frac{N_j}{b}
- \frac{M}{b}\\
&\ge& (11/96) \sum_{h\in H} KU_h
+ \frac{(1/24) S_I^2}{M-T_J+V_I}
- \frac{M-T_J}{b}\\
&\ge& (11/96) \sum_{h\in H} KU_h
+ (1/24) \frac{S_I^2}{M-T_J+V_I}\\
&& {} - \frac{M-T_J}{32(M-T_J+V_I)^2/S_I^2}\\
&\ge& (11/96) \sum_{h\in H} KU_h
+ (1/24) \frac{S_I^2}{M-T_J+V_I}\\
&& {} - \frac{1}{32}\cdot\frac{S_I^2}{M-T_J+V_I}\\
&=& (11/96) \sum_{h\in H} KU_h 
+ (1/96) \frac{S_I^2}{M-T_J+V_I}.
\IEEEyesnumber \label{eq:mu-lb}
\end{IEEEeqnarray*}

From Lemma~\ref{lemma:multi-user-achievability}, we can see that the achievable rate is bounded by:
\begin{equation}
\label{eq:mu-ub}
R(M) \le \sum_{h\in H} KU_h + \frac{2S_I^2}{M-T_J+V_I}.
\end{equation}
Combining \eqref{eq:mu-lb} with \eqref{eq:mu-ub}, we get that, in the case we are considering:
\[
\frac{R(M)}{R^\ast(M)} \le 192.
\]

The rest of the proof consists in carrying out a similar procedure for all other cases.
In the end, the worst-case gap between the achievable rate and the optimal rate is the maximum over the gaps found for each case.

\section{Proofs for the single-user case}

\subsection{Proof of the achievability (Theorem~\ref{thm:single-user-achievability})}

Recall how the memory is divided among the sets $H'$ and $I'$, defined in \eqref{eq:G-partition}: all of the available memory is given to $I'$, which is treated as one super-level.
As a result, all requests for files from $H'$ must be handled by a complete file transmission from the BS.
Since there are $\sum_{h\in H'}K_h$ users making such requests, the result is the same amount of transmissions.

For the set $I'$, now considered as one super-level, we use the single-level strategy from \cite{maddah-ali2013}.
Although only a subset of the caches is active in our setup, the same strategy still applies.
Indeed, the placement in \cite{maddah-ali2013} is a random sampling of the files in all the caches; we do the same placement in this case.
In the delivery phase, we now know the caches to which the users connected.
We perform a delivery as in \cite{maddah-ali2013}, assuming that only these caches were every present in the system.

As a result, the rate required for $I'$ can be directly derived from Lemma~\ref{lemma:single-level}, using $\sum_{i\in I'} K_i$ caches, $\sum_{i\in I'} N_i$ files, and $1$ user per cache.
In addition, we have, from \eqref{eq:G-partition}, that $M\ge N_i/K_i$ for all $i\in I'$.
This implies
\(
M \ge (\sum_{i\in I'}N_i)/(\sum_{i\in I'}K_i),
\)
and hence the rate for $I'$ is:
\[
R_{I'} = \frac{\sum_{i\in I'}N_i}{M}-1.
\]

It will be helpful for the later analysis to refine the partition $(H',I')$ as follows.
\begin{definition}
\label{def:G-partition}
Define the following partition $(G,H,I,J)$ of the set of levels:
\begin{IEEEeqnarray*}{rCl}
G &=& \left\{ g : M < N_g/K_g \text{ and } K_g\le5 \text{ and } M \le N_g/6 \right\};\\
H &=& \left\{ h : M < N_h/K_h \text{ and } K_h\ge6 \right\};\\
I &=& \left\{ i : N_i/K_i \le M \le N_i/6 \right\};\\
J &=& \left\{ j : M > N_j/6 \right\}.
\end{IEEEeqnarray*}
\end{definition}
Furthermore, we rewrite and bound the rate as follows:
\begin{equation}
\label{eq:single-user-achievable-rate}
R(M)
\le \sum_{g\in G} K_g
+ \sum_{h\in H} K_h
+ \frac{\sum_{i\in I}N_i}{M}
+ \left[ \frac{\sum_{j\in J}N_j}{M} - 1 \right]^+,
\end{equation}
where $[x]^+=\max\{x,0\}$.
We define $N_J=\sum_{j\in J}N_j$, and upper-bound the last term by:
\begin{equation}
\label{eq:su-ub-J}
\left[ \frac{N_J}{M} - 1 \right]^+
\le
\begin{cases}
N_J/M & \text{if $M<N_J/6$;}\\
6\left( 1 - M/N_J \right) & \text{if $N_J/6\le M < N_J$;}\\
0 & \text{if $M\ge N_J$.}
\end{cases}
\end{equation}

\subsection{Proof of approximate optimality (Theorem~\ref{thm:single-user-gap})}
\label{app:single-user-gap}

As previously mentioned, we use a cut-set bound to lower-bound the optimal rate.
The idea is to send a certain number $b$ of broadcast messages $X_1,\ldots,X_b$ that serve certain requests.
We choose these requests as follows.
For every level $i\in G\cup H\cup I$, consider a certain number $s_i\le K_i$ of caches.
These caches are distinct across levels.
For all the $b$ broadcasts, the users connected to these $s_i$ caches will altogether request $s_ib$ distinct files from level $i$ if there are that many; otherwise they request all $N_i$ files.
For the levels in the set $J$, we collectively consider some $s_J$ caches (distinct from the rest).
The users at these $s_J$ caches will use all $b$ broadcasts to decode as many files from the set $J$ as possible, up to $s_Jb$ files.
Let $n_J$ denote this number.

If we let $S=\sum_{i\not\in J}s_i + s_J$ be the total number of caches considered, then, by Fano's inequality:
\begin{IEEEeqnarray*}{rCl}
bR + SM
&\ge& H\left( Z_1,\ldots,Z_S,X_1,\ldots,X_b \right)\\
&\ge& \sum_{i\not\in J} \min\left\{ s_ib,N_i \right\} + n_J\\
R^\ast(M) &\ge& \sum_{i\not\in J} s_i\left( \min\left\{ 1,\frac{N_i}{s_ib} \right\} - \frac{M}{b} \right)\\
&& {} + s_J\left( \frac{n_J}{s_Jb} - \frac{M}{b} \right)\\
&=& \sum_{i\not\in J} v_i + v_J. \label{eq:single-user-lower-bounds}
\end{IEEEeqnarray*}

We will analyze each of the $v_i$ and $v_J$ terms separately.
We identify two cases for which the analysis is slightly different.

\subsubsection{Case $M<1/6$}

When $M$ is this small, we choose $b=1$ broadcast message.
Notice that, because of regularity condition \eqref{eq:su-reg}, we have $N_i/K_i\ge1>1/6>M$.

The achievable rate in this case can be upper-bounded by:
\begin{equation}
\label{eq:su-ub-0}
R(M) \le \sum_{i=1}^L K_i.
\end{equation}

Consider now any level $i$.
Let $s_i=K_i$.
Then,
\begin{IEEEeqnarray*}{rCl}
v_i
&=& s_i\left( \min\left\{ 1 , \frac{N_i}{s_ib} \right\} - \frac{M}{b} \right)\\
&=& K_i \left( \min\left\{ 1 , \frac{N_i}{K_i} \right\} - M \right)\\
&\ge& (5/6)K_i. \IEEEyesnumber \label{eq:su-lb-gh}
\end{IEEEeqnarray*}

We can combine \eqref{eq:su-ub-0} with \eqref{eq:su-lb-gh} and \eqref{eq:single-user-lower-bounds} to get:
\begin{equation}
\label{eq:su-gap-0}
\frac{R(M)}{R^\ast(M)} \le \frac65.
\end{equation}

\subsubsection{Case $M\ge1/6$}

We will now choose $b=\ceil{6M}\ge1$.

\paragraph{Bound for $g\in G$}
Consider $s_g=1$.
Then,
\begin{IEEEeqnarray*}{rCl}
v_g
&=& \min\left\{ 1 , \frac{N_i}{\ceil{6M}} \right\} - \frac{M}{\ceil{6M}}\\
&\ge& \frac12 - \frac{M}{6M}\\
&=& \frac13, \IEEEyesnumber \label{eq:su-lb-g}
\end{IEEEeqnarray*}
because $s_gb=\ceil{6M}\le2\cdot6M\le2N_g$.

\paragraph{Bound for $h\in H$}
Consider $s_h=\ceil{K_h/6}\ge1$.
Then,
\begin{IEEEeqnarray*}{rCl}
v_h
&\ge& \frac{K_h}{6} \left( \min\left\{ 1 , \frac{N_h}{4K_hM} \right\} - \frac{M}{6M} \right)
= \frac{1}{72}\cdot K_h, \IEEEyesnumber \label{eq:su-lb-h}
\end{IEEEeqnarray*}
because $s_hb=\ceil{K_h/6}\cdot\ceil{6M} \le 4K_hM \le 4N_h$.

\paragraph{Bound for $i\in I$}
Consider $s_i=\ceil{N_i/6M}$.
Then,
\begin{IEEEeqnarray*}{rCl}
v_i
&\ge& \frac{N_i}{6M} \left( \min\left\{ 1 , \frac{N_i}{4N_i} \right\} - \frac{M}{6M} \right)
= \frac{1}{72}\cdot\frac{N_i}{M}, \IEEEyesnumber \label{eq:su-lb-i}
\end{IEEEeqnarray*}
because $s_ib=\ceil{N_i/6M}\cdot\ceil{6M} \le 4N_i$.

\paragraph{Bound for $J$}
First, if $M\ge N_J$, then the set $J$ contributes nothing to the upper bound on the rate in \eqref{eq:single-user-achievable-rate}; see \eqref{eq:su-ub-J}.
Thus we can ignore it, \emph{i.e.}, say $v_J\ge0$.

So the interesting case is $M<N_J$.
Here, we must decode files from multiple levels collectively.
Consider $s_J=\ceil{N_J/6M}$, where $N_J=\sum_{j\in J}N_j$.
Notice that there are enough users and broadcasts to decode all files, because:
\[
s_Jb \ge \frac{N_J}{6M}\cdot6M = N_J.
\]
However, we must take care that no broadcast considers more than $K_j$ users at a time for any $j\in J$.
This can be ensured: since there are $b=\ceil{6M}$ broadcasts, and $b\ge N_j$ for all $j\in J$, then every broadcast need only consider at most one user per level.
Hence, all of the $N_J$ files can be decoded, and $n_J=N_J$.

If $M<N_J/6$, we have:
\begin{IEEEeqnarray*}{rCl}
v_J
&=& \frac{N_J - s_JM}{b}\\
&\ge& \frac{1}{12M}\left( N_J - \frac{N_J}{12M}\cdot M \right)\\
&\ge& \frac{144}{11}\cdot\frac{N_J}{M}.
\IEEEyesnumber \label{eq:su-lb-j1}
\end{IEEEeqnarray*}

If $N_J/6\le M<N_J$, then $s_J$ is actually equal to $1$, and:
\begin{IEEEeqnarray*}{rCl}
v_J
&=& \frac{N_J - M}{b}\\
&\ge& \frac{N_J - M}{12M}\\
&\ge& \frac{1}{12}\left( 1 - \frac{M}{N_J} \right).
\IEEEyesnumber\label{eq:su-lb-j2}
\end{IEEEeqnarray*}

\subsubsection{Multiplicative gap}

By combining \eqref{eq:su-lb-g}, \eqref{eq:su-lb-h}, \eqref{eq:su-lb-i}, \eqref{eq:su-lb-j1} and \eqref{eq:su-lb-j2} with \eqref{eq:single-user-achievable-rate} and \eqref{eq:su-ub-J}, and also taking into account \eqref{eq:su-gap-0}, we get:
\[
\frac{R(M)}{R^\ast(M)} \le 72,
\]
which concludes the proof of Theorem~\ref{thm:single-user-gap}.\qed

\end{document}